\newtheorem{theorem}{Theorem}
\newtheorem{definition}{Definition}
\newtheorem{assumption}{Assumption}
\newtheorem{construction}{Construction}
\newcommand{\noteJT}[1]{{$\langle${\textcolor{blue}{\textbf{#1}}}$\rangle$}}
\def\BibTeX{{\rm B\kern-.05em{\sc i\kern-.025em b}\kern-.08em
    T\kern-.1667em\lower.7ex\hbox{E}\kern-.125emX}}
\begin{document}
\pgfplotsset{compat=newest,compat/show suggested version=false}
\pagestyle{plain}

\title{Secure Single-Server Nearly-Identical Image Deduplication}

\author[1]{Jonathan Takeshita}
\author[1]{Ryan Karl}
\author[1]{Taeho Jung}
\affil[1]{Department of Computer Science and Engineering\\
University of Notre Dame\\
Notre Dame, IN 46556
}
\affil[ ]{\textit {\{\href{mailto:jtakeshi@nd.edu}{jtakeshi},\href{mailto:rkarl@nd.edu}{rkarl},\href{mailto:tjung@nd.edu}{tjung}\}@nd.edu}}

\flushbottom
\maketitle

\newif\ifframework

\newcommand{\calO}{\ensuremath{\mathcal{O}}}

\begin{abstract}
Cloud computing is often utilized for file storage. Clients of cloud storage services want to ensure the privacy of their data, and both clients and servers want to use as little storage as possible. Cross-user deduplication is one method to reduce the amount of storage a server uses. Deduplication and privacy are naturally conflicting goals, especially for nearly-identical (``fuzzy'') deduplication, as some information about the data must be used to perform deduplication. Prior solutions thus utilize multiple servers, or only function for exact deduplication. 
In this paper, we present a single-server protocol for cross-user nearly-identical deduplication based on secure locality-sensitive hashing (SLSH).
We formally define our ideal security, and rigorously prove our protocol secure against fully malicious, colluding adversaries with a proof by simulation. We show experimentally that the individual parts of the protocol are computationally feasible, and further discuss practical issues of security and efficiency.
\end{abstract}

\begin{IEEEkeywords}
Secure Deduplication, Fuzzy Deduplication, Secure Locality Sensitive Hashing
\end{IEEEkeywords}

\section{Introduction}

Cloud-based storage has become an increasingly popular solution for storing large amounts of data. Both users and providers of these systems have the common incentive to reduce the amount of storage and bandwidth these systems require. Users also have the incentive of privacy - they prefer for the provider and for other users to learn as little about their data as possible. The obvious solution to this problem is encryption - instead of uploading their files to a cloud server, users will instead upload an encryption of their file. 
Data encryption is neccesary to protect against data breaches, which may cost cloud storage providers millions of dollars in damages and lost business \cite{Target}.

As the amount of data stored by cloud storage providers increases, they will seek to mitigate their increasing costs from the extra storage. One technique to save storage and bandwidth is deduplication, where identical or similar pieces of data are detected, allowing servers to avoid storing redundant data. When identical or nearly-identical files are uploaded, the server will keep pointers to a single copy of data instead of storing redundant copies. There is a natural dissonance between deduplication and privacy. For accurate deduplication, some information about the file must be provided in order to test whether that file is similar to previously uploaded files. However, this provision of information defeats the purpose of encrypting files for data privacy, leading us to consider the question of to what extent a deduplication protocol can be both accurate and secure. 


\begin{figure}[t]
\centering
\includegraphics[width=\linewidth]{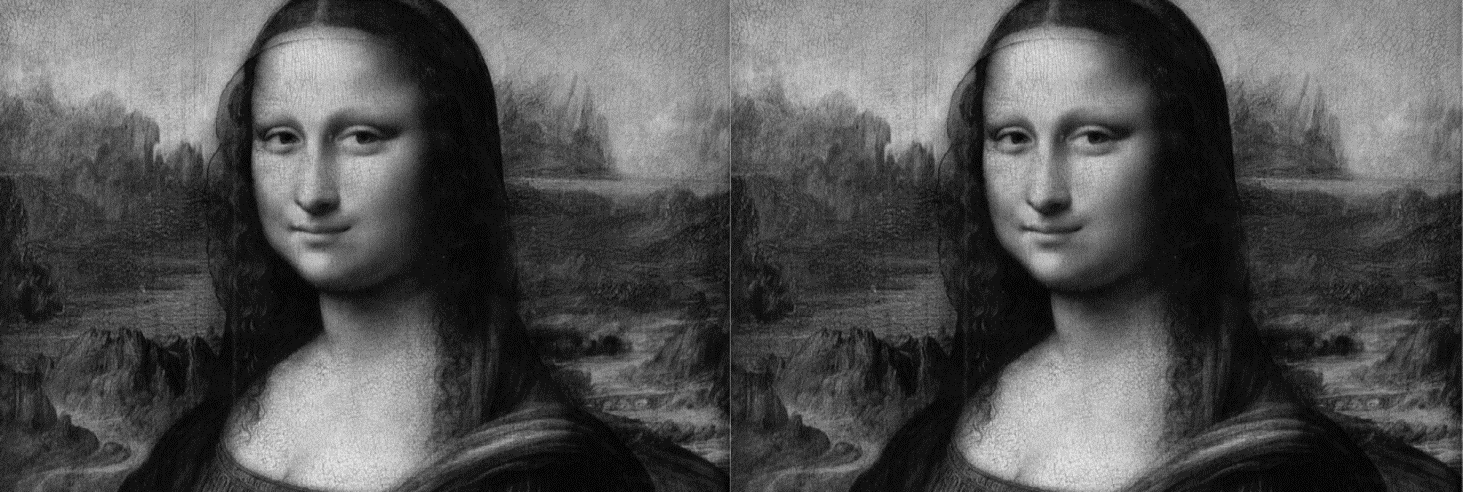}\vspace{-10pt}
\caption{Example of Nearly-Identical Images \cite{ML}}
\label{fig:mona_lisa}\vspace{-10pt}
\end{figure}

\noindent \textbf{Overview of Deduplication:}
Deduplication is the process of detecting identical or nearly-identical data for the purpose of conserving storage by storing unique data only.
Deduplication can take place on entire files, or on individual blocks of files, but it has been noted that the distinction is not important when considering deduplication schemes \cite{Liu2015}. 
Deduplication schemes can be classified as exact or nearly-identical. Exact deduplication works to determine if files are exact copies \cite{Yan2015,Lei2014,WEN2014,RASHID201654,Liu2015}. 
Nearly-identical deduplication works to detect highly similar files \cite{Chen2013AHD,Li2015}, in addition to exactly identical  files. However, this additional functionality requires more computation. We consider similarity as it relates to human perception of nearly-identical images as other similar works do \cite{Li2015} (e.g., Fig. \ref{fig:mona_lisa}).

In a cloud storage system utilizing deduplication for saving storage, the deduplication can be carried out by the clients or the server. It is often preferable in high-trust scenarios for the server to carry out deduplication, to reduce the computational load on the clients. However, in situations where privacy is a concern, clients may not wish to provide the server with the necessary data to perform deduplication.
\ifframework
Alternative allocations of the burden of deduplication can produce different implications for privacy and efficiency, which we explore in Section \ref{sec:framework}. 
\fi
Deduplication can be performed between data from multiple users or only across data from a single user. Only applying deduplication on a per-user basis is a simple answer to concerns of cross-user privacy, but cannot reduce storage in the event of multiple users storing the same file.

Client-based secure nearly-identical deduplication is most useful in a scenario where clients' computation is plentiful, but their storage is limited. For example, ordinary smartphones can perform the computation needed to carry out nearly-identical deduplication when plugged in at night, and this deduplication can reduce the use of smartphones' limited storage.  This is also applicable with use cases involving IoT devices. For instance, there has been recent interest in utilizing IoT devices to allow for the affordable deployment of biometric technology, but a major challenge in this scenario is building efficient systems in spite of the space constraints \cite{dhillon2017lightweight}.  Secure deduplication can be used to decrease the need to store a large amount of redundant data on a server, which would alleviate practical space constraints when leveraging such technology in the wild.

\ifframework
\noindent \textbf{Summary of contributions:} (1) Formulation of a general framework for secure nearly-identical deduplication, with general analyses of the tradeoffs of efficiency and privacy; (2) Design and implementation of a secure nearly-identical deduplication scheme for images, along with a proof of security and discussion of practical issues; (3) Experiments with real-world datasets showing the feasibility of our protocol.
\fi

\noindent \textbf{Summary of contributions:} (1) A review of related work in the area of deduplication; (2) Design and implementation of a nearly-identical deduplication scheme for images, with \textit{security against fully malicious, colluding adversaries and only utilizing a single untrusted server}; (3) A proof of security of our scheme, with a discussion of practical issues; (4) Experiments with real-world datasets showing the feasibility of our protocol and implementations.

\section{Related Work}

\subsection{Exact Two-server Deduplication}

In some schemes, hashing is used to protect data privacy. The scheme proposed by Wen et al. uses two servers to construct a system for exact deduplication \cite{WEN2014}. A storage server will store both hashes and encryptions of users' images, while a verification server will store only the hashes. The multiple redundancy of both the storage and verification server storing image hashes protects the user in the case that one server behaves maliciously. Convergent encryption is used to ensure users can access deduplicated images.
A scheme proposed by Yan et al. uses proxy re-encryption to share data between users who have attempted to upload identical data \cite{Yan2015}. Similarly to the work of Wen et al., a verification server is used to store information needed for deduplication.

\subsection{Exact One-Server Deduplication}


The scheme of Rashid et al. similarly leverages hash values for image privacy, but with only one server \cite{RASHID201654}. 
Beyond the storage saved by using deduplication, this scheme achieves even better savings by compressing images. The compression takes a tree-like, hierarchical form, where the original image cannot be reliably reconstructed without the most significant information from the higher levels of the tree. Thus, by only encrypting the most significant information from the compression of an image, the amount of encrypted data sent and stored can be reduced, saving bandwidth and storage.

Liu et al. constructed a system that allows secure deduplication with only one central server \cite{Liu2015}. A key building block of this protocol is user-based key sharing, which takes place through a subprotocol known as Password-Authenticated Key Exchange (PAKE) \cite{abdalla2005simple}. 
In this protocol, upon a file upload the server will compare a short hash (e.g. 13 bits) of the file to short hashes of previously uploaded files, and use this to construct a shortlist of users that may have previously uploaded identical files. Data privacy is preserved because many collisions (of different files) are intentionally created in this list. Additional computation by both the clients and server allow the server to check whether a duplicate file exists. If it does, the server will return an encryption key of the file to the uploader. If not, then the server will accept the file as a unique one. 

The protocol is provably secure against malicious and colluding adversaries. This protocol also has the advantage of being generalized to any type of data, not just images or text. Many practical attacks are precluded by the use of server- or client-side rate limiting.
The protocol does have room for improvement. Its utility is strictly limited to the scenario of exact deduplication, because PAKE requires exact equality of the parties' inputs for identical key exchange. From an efficiency viewpoint, the protocol requires up to six communication rounds per upload.  

\subsection{Nearly-identical Two-Server Deduplication}

By using two servers, Li et al. are able to construct a system for secure nearly-identical image deduplication\cite{Li2015}. Their protocol uses one server for deduplication, which stores the perceptual hashes of users' images and performs the work of deduplication. The other server stores the users' encrypted images. A perceptual hashing method is used to perform image deduplication by mapping similar images to identical hashes.
In this protocol, the deduplication server is only able to see perceptual hashes of the users' images, and the storage server sees only encryptions of those images, making this system effective for protecting users' privacy against other parties or external adversaries. However, the system has users share group keys among themselves, which requires that users will know \textit{a priori} whom will be uploading similar images. Thus if two users in different groups upload identical or similar images, the storage server will store both. Later work extended this system with Proof of Ownership and Proof of Retrieval \cite{bini2018proof}.

\subsection{Proof of Ownership/Retrieval}
Proof of Ownership/Retrieval (PoW and PoR) schemes aim to provably ensure a client's ownership of a file or their ability to recover a stored file from a server, respectively \cite{juels2007pors,bowers2009proofs}. Both of these concepts have been applied to deduplication, especially PoW \cite{gonzalez2015efficient,pow-boosting,bini2018proof,zheng2012secure,blasco2014tunable,jin2013anonymous,yu2015proof}. PoW and PoR have even been applied to secure nearly-identical deduplication, though the scenario is much less adversarial than ours \cite{bini2018proof}. PoR is perpendicular to our work: though it could be applied with our scheme, it is not the focus of this work. We use PoW in our work for both deduplication and access control.

\ifframework
\section{Formulation of a General Framework}\label{sec:framework}
We note common traits of the methods above, and generalize these commonalities to derive a series of steps that comprise a framework for secure nearly-identical image deduplication (depicted in Fig. \ref{fig:dedup}). 
Note that the method described here can be extended to other types of files amenable to deduplication.

\begin{figure*}[t] 
\centering
\includegraphics[width=\linewidth]{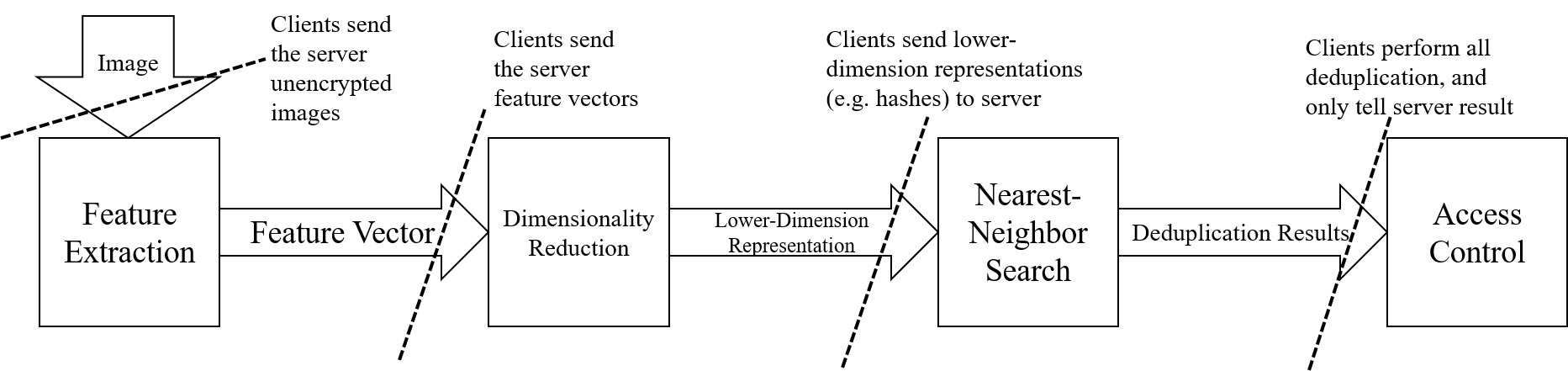}\vspace{-10pt}
\caption{Framework for Secure Nearly-Identical Deduplication}\vspace{-10pt}
\label{fig:dedup}
\end{figure*}

\subsection{Feature Extraction}

Chum et al. used SIFT descriptors and color histograms to extract features\cite{Chum07,Chum08}, while the system of Pintrest used a neural network for feature extraction\cite{Pintrest}. The perceptual hash used by Li et al. also performs some extraction of features as part of computing the hash\cite{Li2015}. Extraction of grey-block features is the first step in the method of Chen et al.\cite{Chen2013AHD}. We thus identify feature extraction as a frequently used method in nearly-identical image detection.

\subsection{Dimensionality Reduction}

After feature extraction, Chum et al. use locality-sensitive hashing and MinHashing to reduce the feature vector to a representation in a lower dimension, and show that feature vectors' lower-dimensional representations being mapped to the same hash bucket implies some similarity of the corresponding images\cite{Chum07}. Similarly, Chen et al. project image feature vectors into one-dimensional space\cite{Chen2013AHD}. The perceptual hash of Li et al. gives similar functionality, with similar images having identical or nearly-identical perceptual hashes\cite{Li2015}. 
For efficiency, dimensionality reduction via similarity-based classification is often necessary, and we thus choose this to be the second step of our framework. 

\subsection{Nearest-Neighbor Search}

Once dimensionality reduction has been performed, nearest-neighbor search is an efficient way to detect similar images. When using perceptual hashing as in the system of Li et al., any images in the same hash bucket are the nearest neighbors and are said to be similar\cite{Li2015}. 
Similarly, images whose locality-sensitive hashes or MinHashes are equal are identified as similar by Chum et al. \cite{Chum07}, thus these are the nearest neighbors and similar to one another. In the case of Chen et al.\cite{Chen2013AHD}, the problem reduces to finding a nearest integer, which is a simplified nearest-neighbor searching.
Nearest-neighbor searching is thus identified as the third step in our framework.

\subsection{Access Control}

The final step is control and distribution of data access. This step is the most relevant to concerns of security and privacy, as it covers which parties are able to see which information. If an image has been found to be similar to an existing one, the central server will have to allow users with identical or nearly-identical images to download and recover the image, therefore access control is the last necessary step. When images are encrypted for privacy, this is a non-trivial task. One possible approach is to assume parties will share group keys, as in the system of Li et al\cite{Li2015}. Another is to have parties securely exchange keys, as in the system of Liu et al\cite{Liu2015}. Convergent encryption, used by Wen et al.\cite{WEN2014}, is a method of access control in exact deduplication, where users with identical data can generate the same key. 

\subsection{Discussion of Subprotocol Choices in the Framework}

\noindent \textbf{Feature extraction:}
Different methods can be used for the feature extraction step. The choice  will be dependent on the scenario and types of images. One important consideration is the need for training and retraining. Neural networks are highly effective tools for classification and feature extraction \cite{ResNet}, but their efficacy depends on the data on which they were trained. Similarly, the bag-of-words approach used to quantize feature vectors in the works of Chum et al. requires $k$-means training on a previously available dataset\cite{Chum07,Chum08}.


\noindent \textbf{Access control:}
Similar to feature extraction, the access control step is also highly dependent on the scenario. In most cases, users will upload only encryptions of their images, requiring users with identical or nearly identical data to have some way of sharing keys. If we can assume that users have some level of trust between them, such as with employees in a corporation, then we could make the choice of Li et al. to have users share group keys\cite{Li2015}. Without that assumption, sharing data becomes a much more difficult problem. The PAKE method used by Liu et al. allows parties to share keys after deduplication has taken place, but requires additional computation by the parties\cite{Liu2015}.

\subsection{Discussion of Tradeoffs in Different Settings}\label{subsec:tradeoffs}

\noindent \textbf{When to Transfer Data:}
The first tradeoff we consider is at which point in the protocol to transfer data from the client to the server. We consider advantages and disadvantages of different points with regard to the computation time and privacy afforded to the client.

If the client simply sends their image to the server, all the processing for deduplication is moved to the server, making it extremely efficient for the client. Access control becomes trivial as well, as images are not encrypted at all. The obvious downside is the lack of image privacy.

Sending only an extracted feature vector to the server for deduplication and the encrypted image for storage slightly improves the client's privacy, but this increases the computation load on the client by requiring that the client extract features from an image. Besides increasing the load for individual uploads, the client must also retrain their feature extraction method or receive new parameters, if their feature extraction method requires training based on data. One significant downside of such data transfer is that feature vectors of images can be used to derive the original image, or one close to the original \cite{ReverseFeature,weinzaepfel2011reconstructing}, therefore the server learns significantly more than what is necessary for performing deduplication. Having the client send only the data resulting from dimensionality reduction instead of feature vectors improves users' privacy, as it is much more difficult to reverse dimensionality-reduced data to a whole image. This approach puts more computational load on the client by requiring them to compute hashes of their feature vectors, however dimensionality reduction techniques are often more efficient than the feature extraction.


In an extreme end, clients could elect to perform all the work of deduplication themselves, only using the server for storing encrypted images. This preserves the most privacy, since the server learns ciphertexts only. However, this disables deduplication across different clients, and the benefit of deduplication is significantly reduced.

\noindent \textbf{Single- vs. multi-server systems:}
Aforementioned approaches \cite{WEN2014, Li2015, Yan2015} achieve a high degree of functionality and efficiency by using multiple servers in their protocols. The presence of multiple independent servers allows  stronger privacy guarantees by limiting the amount of information any single server sees. This assumption is not strictly necessary for secure deduplication - single-server schemes such as the system of Liu et al. can perform deduplication and still afford users privacy; the tradeoff they make is that a more complex protocol with more computation is required\cite{Liu2015}. This paper also presents secure nearly-identical deduplication with a single server, as we present in the next section. In some settings the assumption of multiple servers may not be realistic or feasible, making single-server solutions more desirable.

\noindent \textbf{Key distribution:}
When using PAKE for key distribution, there is an additional tradeoff to make. In the PAKE protocol, both parties must have identical short passwords in order to receive the same key. This leads us to a natural question: where does this password come from? One possibility is for the clients to generate a password from their images in a way that ensures similar images will result in the same password with high probability (e.g. perceptual hashing, locality-sensitive hashing). This approach requires additional computation on the part of the users, but does not require the involvement of a central server. The opposite approach is for a central server to provide the users with a shared secret when prompting the users to begin PAKE. While this requires much less computation from clients (and overall, if a trivial choice of password is made), it requires a high level of trust in the server distributing the shared secret. 
\fi
\section{Security Definitions and Ideal Functionality}

\subsection{System Model}

We consider the scenario where the parties consist of an arbitrary number of users and a single cloud storage server. The users wish to use the server to securely store images, but without allowing the server to learn the content of their images, or the other users being able to determine the content of their images unless the users both have an identical or nearly-identical image. All parties have the shared goals of wishing to conserve storage while also keeping their own information secure. We thus only consider the case where the server stores encrypted images.

\subsection{Adversary Model and Goals}

We consider the (very challenging) case of \textit{fully malicious, colluding} adversaries. The server, any of the clients, or any collusion thereof may take any action. Their adversaries' goal in this scenario is to gain some semantically useful information about an innocent users' data that they do not already possess. They may also choose to take actions that may abort the correct execution of the system (e.g. refusing to reply, sending junk data). This type of behavior is a practical issue, and does not compromise the privacy of innocent users' data.

\subsection{Security Model}

We define the ideal functionality $\delta$ of secure nearly-identical  deduplication over encrypted data in Fig. \ref{fig:ideal-model}. This functionality is ideal in the sense that an ideal, fully trusted `system' takes the input and returns the output, without disclosing any information to any participant. This functionality characterizes the views of adversaries in an ideal world where the whole process is delegated to an ideal `system'. Our protocol (Section \ref{sec:protocol}) will be designed such that the adversaries' views during the execution of it in the real world are computationally indistinguishable from the adversaries' views in the ideal world. The three types of participants are the storage server $S$, the user $U_i$ attempting to upload an image, and preexisting users $U_j$ who have already uploaded a file. A protocol implementing $\delta$ is considered secure if it implements $\delta$ and leaks negligible information about $U_i$'s or $U_j$'s images and keys, and $S$ only knows whether a PAKE transaction has been initiated between two parties or not. 
An adversary $A$ may compromise any one of $S, U_i$, or $U_j$. We highlight that, in $\delta$, $U_j'$ learns nothing about $I_i'$ if $I_i'$ and $I_j'$ are not similar, and the server $S'$ learns only some data relevant to $I_i'$ that cannot be used to reconstruct $I_i'$ except with negligible probability (e.g., hashes of an image or of its feature vector).



We follow the approach of \cite{hazay2008efficient} to formalize this intuition in the following definition:
\begin{definition}
Let $\Gamma$ and $\delta$ be the real and ideal functionalities respectively. Protocol $\Gamma$ is said to securely compute in the presence of fully malicious adversaries with abort if for every non-uniform probabilistic polynomial time adversary $A$, for the real model there exists a non-uniform probabilistic polynomial-time adversary $S$ for the ideal model such that for every input $x, x' \in \{0, 1\}^*$ with $|x| = |x'|$, security parameter $\kappa$, every auxiliary parameter input $z \in \{0, 1\}^*$, and locality sensitive hashes $h_{p_{z}}(x), h_{p_{z}}(x')$, the views generated by  $\{IDEAL_{\delta,S(z)}(x', k, h_{p_{z}}(x')), \}$ and $\{REAL_{\Gamma,A(z)}(x, k, h_{p_{z}}(x))\}$ are computationally indistinguishable \textit{w.r.t.} $\kappa$. If users fail to respond during the protocol, the protocol aborts.
\end{definition}

Here, the inputs $x, x'$ are the images uploaded by users, the security parameter $\kappa$ is the number of bits of security, the auxiliary parameter data $z$ contains details about the implementation of the protocol (e.g. the cyclic group used in PAKE, which hash functions to use), and the locality-sensitive hashes $h_{p_{z}}(x), h_{p_{z}}(x')$ are the hashes of the two images $x, x'$ which indicate the similarity scores of the images. Note that some practical attacks are not prevented by this ideal functionality, most notably that adversaries are able to learn in a quantifiable way how similar their uploaded images are to those of another user. This is a common problem in secure deduplication schemes, and to the best of our knowledge there is no consensus in the community of how to address this concern \cite{Chen2013AHD, gang2015secure}. We address some of these attacks with practical safeguards discussed in Section \ref{sec:practical-problems}.

\begin{figure}[t] 
    \centering
    \begin{mdframed}
\textbf{System Inputs:}
\begin{itemize}
    \item For $0 \leq i \leq N$, uploader $U_i'$ inputs an image $I_i'$.
    \item For $0 \leq j \leq N$ and $j\neq i$, previous uploaders $U_j'$ input images $I_j'$. 
    \item The server $S'$ has encryptions of images $I_j'$ under symmetric keys $k_j'$
\end{itemize}
\textbf{System Outputs:}
\begin{itemize}
    \item If $I_i'$ is nearly identical to some uploaded $I_j'$, $U_i'$ gets $k_j'$ as well as the encrypted $I_j'$, and $S$, $U_i'$, and $U_j'$ may learn $i$ and $j$. Otherwise, $U_i'$ gets a new symmetric key $k_i'$, and $S'$ gets an encryption of $I_i'$ under $k_i'$.
\end{itemize}

\end{mdframed}\vspace{-10pt}
    \caption{Ideal Functionality $\delta$}
    \label{fig:ideal-model}\vspace{-10pt}
\end{figure}

\section{Design of a Concrete Protocol}\label{sec:protocol}


\subsection{Preliminaries}

\begin{definition}
A locality-sensitive hash scheme is a distribution on a family $F$ of hash functions operating on a collection of objects $K$, such that for two objects $x, y \in K$, $\textbf{Pr}_{h_p \in F} [h_p(x) = h_p(y)] = sim(x, y)$ for a hash parameter $p$
where $sim(x, y) \in [0, 1]$ is some similarity function defined
on $K$.
\end{definition}

Intuitively, a locality-sensitive hash scheme hashes similar objects to the same value. However, this definition makes no statements about the security of the function. In particular, the definition does not imply preimage resistance, meaning that an adversary may be able to reverse the locality-sensitive hash to find the original input. 

\begin{definition}
 A secure locality-sensitive hash (SLSH) is a locality-sensitive hash function $h$ that has the property of preimage resistance: for any input $x$ and a polynomially bounded number of parameters $p_1 \cdots p_t$, it is computationally intractable to find $x$ given only $h_{p_1}(x) \cdots h_{p_t}(x)$ and $p_1 \cdots p_t$.
\end{definition}

A SLSH can be constructed from the standard assumption of the existence of cryptographic hash functions \cite{KatzLindell}.

\begin{construction}
A SLSH can be constructed as the composition $H \circ LSH_p$ of a locality-sensitive hash function $LSH_p(x)$ and a cryptographic hash $H(x)$, i.e. $SLSH(x):=H(LSH_p(x))$.
\end{construction}

Cryptographic hash functions are one-way functions, with the property of preimage resistance. 
Using a cryptographic hash to construct a SLSH gives it the property of preimage resistance, which is desirable for our application.




\begin{definition}
 A password authenticated key exchange (PAKE \cite{abdalla2005simple}) is a functionality where two parties $P_1$ and $P_2$ each input a password $pw_1$ and $pw_2$ receive as respective output keys $k_1$ and $k_2$. If $pw_1 = pw_2$, then $k_1 = k_2$, and otherwise $P_1$ and $P_2$ cannot distinguish $k_1$ and $k_2$ respectively from a random string of the same length. 
\end{definition}

\subsection{Protocol Description}

Our protocol $\Gamma$ is shown in Fig.  \ref{fig:protocol}, where the parties consist of a single central server $S$ and $N$ users $U_{1}, \cdots U_{N}$. The server maintains $t$ hash tables $HT_{1} \cdots HT_{t}$ used in deduplication, and makes the parameters of each table public. When a user $U_{i}$ wishes to upload an image $I_{i}$ they will first calculate a feature vector of the image, $V_{I_i}$, and then find $t$ SLSHes $H_{1} \cdots H_{t}$ according to the server's hash parameters. 
After this client-side calculation, the uploading user $U_{i}$ will then send the SLSHes for its image $I_{i}$ to $S$. The server then constructs a shortlist of possibly similar images by checking the received hash values against the hashes in the tables $HT_{1} \cdots HT_{t}$ and noting any collisions. Images $I_i, I_j$ whose SLSHes collide will have similar feature vectors (i.e. $V_{I_{i}} \approx V_{I_{j}}$), and are similar. Thus the server can identify any images with at least $c$ (a scenario-dependent parameter) hash collisions as similar images.
If no other image is found to be similar to the new image $I_i$, then the server indexes each hash value $H_{x \in [1,t]}$ into table $HT_{x}$. It then allows the uploading user to upload an encryption of its image $ENC_k(I_{i})$ (with the encryption key $k$ being unique for $I_i$), which $S$ then stores.
If the image being uploaded $I_i$ is found similar to a stored image $I_j$ (so that $V_{I_{i}} \approx V_{I_{j}}$), then the server directs the original owner $U_{j}$ and new uploader $U_{i}$ to distribute the image's encryption key to $U_i$ through PAKE, and allows the new uploader to access the (encrypted) original image. (In case of multiple possible similar images, any of the nearly-identical images can be chosen as the similar one, though a salient choice would be to use the image with the most collisions.)



\begin{figure*}[htpb] 
    \begin{mdframed}
    
\begin{center}
   Our Protocol $\Gamma$\medskip
\end{center}\vspace{-5pt}

\noindent \textbf{Definitions:} Let $S$ be the central server, and $U_1 \cdots U_N$ be users of the server. The server has $t$ hash tables $HT_1 \cdots HT_t$ with $t$ sets of public parameters. The algorithms $GEN, ENC, DEC$ are the key-generation, encryption, and decryption algorithms of a symmetric-key encryption scheme. $\{H_p\}$ is a family of SLSH algorithms, where elements are parameterized for a set of parameters $p$.

\noindent \textbf{Client-Side Computation:}  

      \begin{enumerate}
          \item  User $U_{i}$ wishes to upload an image $I_{i}$ and calculates a feature vector of the image, $V_{I_i}$. The user then calculates $t$ SLSHes $H_{1}=H_{p_1}(V_{I_i}), \cdots, H_t=H_{p_t}(V_{I_i})$ of $V_{I_i}$ according to the server's hash parameters $p_1 \cdots p_t$.

        \item $U_{i}$ will then send the hashes $H_{x \in [1,t]}$ for its image $I_{i}$ to $S$. 
        
        \end{enumerate}
        
\noindent \textbf{Server-Side Deduplication:}

        \begin{enumerate}
            \item $S$ will then compare $H_{x}$ with values already in its hash tables. For $x \in [1,t]$, the server will check $H_x$ against the values (filenames of previously indexed images) stored in $HT_x$ at $H_x$, and add any values found to a shortlist, counting how many times that value has been found in the tables. Once this has been completed, the server can conclude that the image $I_i$ is similar to another image $I_{j \neq i}$ owned by a user $U_j$ if its hashes have at least $c$ collisions with the hashes from $I_j$.
            
            \item  If no other image is found to be similar to the new image, then the server indexes the filename of $I_i$ in its tables $HT_1 \cdots HT_t$ at the locations $H_1 \cdots H_t$, and allows the uploading user to upload an encryption of its image $ENC_k(I_{i})$, which $S$ then stores.
            
            \item If the image being uploaded is found similar to another stored image (i.e. $V_{I_i} \approx V_{I_{j}}$), then the server directs the original owner $U_{j}$ and new uploader $U_{i}$ to share the encryption key of $I_j$ through PAKE, and allows the new uploader to access the encryption of the original image.
            
        \end{enumerate}

\noindent \textbf{Client-Based Access Control}
        \begin{enumerate}
           
           \item After being so directed by $S$, $U_i$ and $U_j$ choose and share fresh sets of SLSH parameters $p_i, p_j$ respectively. They then calculate the hashes $H^i_i = H_{p_i}(I_i)$ and $H^i_j = H_{p_i}(I_j)$ of their feature vectors $V_{I_i}$ and $V_{I_j}$ according to $p_i$, and similarly calculate the hashes $H^j_i$ and $H^j_j$ according to $p_j$.
           
           \item $U_i$ and $U_j$ perform the PAKE protocol twice, using $H^i_i$ and $H^i_j$ respectively as input to the first session and $H^j_i$ and $H^j_j$ respectively as input to the second session. They receive back keys $k^i_i$ and $k^i_j$ respectively from the first session, and keys $k^j_i$ and $k^j_j$ respectively from the second session. They then concatenate their keys to form $k_i = k^i_i \| k^j_i$ and $k_j = k^i_j \| k^j_j$.
            
            \item If the users' images $I_i$ and $I_j$ are similar, then their feature vectors will be similar, and with high probability will be hashed to the same value under a SLSH. Then $k_i = k_j$, and decryption succeeds, allowing $U_i$ to recover $k$.
            
            \item If the images $I_i$ and $I_j$ are not similar, then $U_i$ cannot recover $k$ and will not be able to decrypt $I_j$.
        \end{enumerate}

\end{mdframed}\vspace{-10pt}
    \caption{\normalsize Our deduplication protocol $\Gamma$ with security against fully malicious, colluding adversaries.}
    \label{fig:protocol}\vspace{-10pt}
\end{figure*}

\noindent \textbf{Feature Extraction:}
For feature extraction we use the ResNet neural network architecture. Compared to other similar architectures for image feature extraction (e.g. the VGG and AlexNet architectures used by the system of Pintrest \cite{Pintrest}), ResNet can achieve higher accuracy with less computation, making it an attractive choice for accuracy and efficiency \cite{ResNet}. 

\noindent \textbf{Dimensionality Reduction:}
We use a well-known method of locality-sensitive hashing based on random planes for dimensionality reduction \cite{Charikar,grauman2007pyramid}.
To construct a SLSH from the LSH, we compose the locality-sensitive hash with a cryptographic hash function (SHA256 in our implementation).

\noindent \textbf{Nearest-Neighbor Search:}
The nearest-neighbor search is made easy by the SLSHing carried out previously. We use multiple hash tables to be robust against the small possiblity that similar items might differ in parts of their locality-sensitive hash (leading to a potentially wildly different SLSH value).
Items hashed to the same hash buckets will be similar, thus we can simply choose the item with the most hash collisions (above a minimal threshold) as a similar image.

\noindent \textbf{Access Control:}
For post-deduplication image sharing, we use the PAKE method \cite{abdalla2005simple}. After two users are notified to share keys by the server, they first mutually agree upon a new set of SLSH parameters. They then calculate SLSHes of the feature vectors of their images, and perform PAKE-based key sharing with those hashes as input. When the users' images are similar, the SLSHes used as input will be equal with high probability, and the users will receive identical keys. 
The key received by the holder of the original image is used to symmetrically encrypt the original image's encryption key. That encryption is then sent to the uploading user. If the keys received from PAKE are identical, then the uploading user will be able to later decrypt the encryption of the original image that the server stores. If the users' images are not similar, then the SLSHes of their feature vectors will be different (with high probability), and decryption of the encrypted encryption key will fail, because the PAKE protocol will return different keys to participants with differing inputs.

\subsection{Advantages of our Protocol}

Our system uses a single untrusted server, with the user performing feature extraction and dimensionality reduction before sending hash values to the server. The users also do the work in rehashing and PAKE required for access control. This arrangement allows for a very high degree of security and utility in a highly adversarial setting. While more computation must be done on the user's side, this is not prohibitively expensive. 

\section{Proof of Security}

\begin{theorem}
$\Gamma$ securely computes the ideal functionality $\delta$ in the presence of fully malicious, colluding adversaries with abort if PAKE is secure against fully malicious adversaries, the encryption scheme used is also secure, and cryptographic hash functions exist. If users fail to respond during the PAKE protocol, the protocol aborts.

\end{theorem}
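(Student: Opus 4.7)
The plan is to prove security by constructing a polynomial-time simulator $S$ in the ideal world for every non-uniform PPT adversary $A$ in the real world, and show that the two views are computationally indistinguishable. Because $\Gamma$ decomposes cleanly into three phases (SLSH upload, server-side shortlisting, and PAKE-based access control), I would proceed by a standard hybrid argument that replaces each real-world primitive with its ideal counterpart one at a time: (i) replace the SLSH outputs that leave an honest client with uniformly random strings using the preimage resistance and collision resistance of the cryptographic hash $H$ in the SLSH construction $H\circ LSH_p$, (ii) replace the real PAKE session with the ideal PAKE functionality $\mathcal{F}_{\text{PAKE}}$, and (iii) replace ciphertexts produced under honest keys with encryptions of $0^{|I|}$ by IND-CPA security of the symmetric scheme. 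At each hybrid step, a distinguisher implies a distinguisher against the corresponding primitive.

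I would then enumerate the possible corruption patterns and build $S$ for each case. \textbf{Case 1: corrupted $S$ alone.} The simulator receives from $\delta$ only the bit indicating whether a PAKE session was initiated and, if so, the identities $i,j$; it generates fake SLSH tables populated by uniformly random hash values and simulates PAKE flows using $\mathcal{F}_{\text{PAKE}}$. Indistinguishability of the simulated and real transcripts reduces to preimage resistance of $H$ (so the random values are indistinguishable from genuine SLSH outputs of unknown feature vectors) together with PAKE security (so the simulated PAKE flows are indistinguishable from real ones to the server, who has no password input). \textbf{Case 2: corrupted uploader $U_i$.} The simulator extracts $U_i$'s claimed image $x$ (from the SLSH queries it makes, which uniquely determine the feature vector only up to LSH collisions), forwards $x$ to $\delta$, receives either the new-key response or the sibling key, and simulates the server's response plus the PAKE session accordingly; here we use PAKE's password-extraction property so that the simulator can invoke $\mathcal{F}_{\text{PAKE}}$ on behalf of the honest $U_j$. \textbf{Cases 3--5: corrupted $U_j$, collusions.} These proceed analogously; the all-honest-$U_i$ case reduces to PAKE hiding the honest party's input and IND-CPA hiding of the underlying ciphertext. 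Abort behavior is handled by $S$ simply forwarding the abort signal to $\delta$, which matches the ``with abort'' semantics of the ideal functionality.

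The main obstacle will be the case of a malicious $U_i$ colluding with $S$, because $U_i$ is free to invent SLSH queries that do not correspond to any image and to deviate during PAKE. The simulator cannot directly ``extract'' an image from hash queries alone, so the reduction must be stated in terms of what $\delta$ exposes: $U_i$ learns a key $k_j$ exactly when the committed inputs to PAKE coincide, and the ideal functionality is parameterized so that this is precisely the nearly-identical event. I would handle this by invoking the UC-style password-extraction guarantee of PAKE \cite{abdalla2005simple}: $S$ extracts $U_i$'s effective password, tests it against the honest $U_j$'s SLSH under the freshly agreed parameters $p_i,p_j$, and queries $\delta$ only when the test collides. A subtle point is that $U_i$ could attempt multiple upload sessions with different fabricated hashes to probe the honest users' images; this is bounded by the rate-limiting and the fact that each PAKE invocation rules out at most one password guess, so the advantage per session is at most $1/|\text{range}(H)|+\mathsf{negl}(\kappa)$, which accumulates only polynomially.

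Finally, I would assemble the hybrid chain $H_0=\text{REAL}_{\Gamma,A(z)}\to H_1\to H_2\to H_3=\text{IDEAL}_{\delta,S(z)}$, where the gap between consecutive hybrids is bounded by the SLSH, PAKE, and encryption security games respectively. Composing the three negligible advantages yields the required computational indistinguishability with respect to the security parameter $\kappa$, completing the proof.
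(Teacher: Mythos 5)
Your proposal follows essentially the same route as the paper's proof: a simulation argument organized by corruption pattern (corrupt uploader, corrupt previous uploader, corrupt server, and the collusions, each reduced to the single-corruption cases), with indistinguishability resting on exactly the same three pillars --- preimage resistance of the SLSH via the cryptographic hash, PAKE security (which the paper likewise idealizes by treating PAKE as an oracle, much as you invoke $\mathcal{F}_{\text{PAKE}}$ and its extraction property), and the security of the symmetric encryption. The main differences are cosmetic: you wrap the case analysis in an explicit hybrid chain and fold rate-limiting into the argument for fabricated hashes, whereas the paper argues each case directly and defers rate-limiting to its discussion of practical safeguards.
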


\begin{proof}
We will show that the execution of the
protocol $\Gamma$ in the real world is computationally indistinguishable from the execution of the ideal functionality $\delta$. This proof is inspired by that of \cite{Liu2015}. The simulator $SIM$ can both access $\delta$ in the ideal model and obtain messages that the corrupt parties would send in the real model. $SIM$ generates a message transcript of the ideal model execution $\delta$ that is computationally indistinguishable from that of the real model execution $\Gamma$. To simplify the proof we assume that the PAKE protocol is implemented as an oracle to which the parties send inputs.

Our proof assumes that parties will send dishonestly constructed messages, and does not consider a party choosing to not send a message. 
Note that if any party refuses to respond or sends junk data, the honest parties can abort the protocol at that point, allowing us to achieve security with abort.

\noindent\textbf{A corrupt uploader $CU$:}
    We first assume that $S$ and $U_{j}$ are honest and construct a simulator for $CU$.  The simulator records $CU$'s SLSHes of the form $H_{p}(V_{CU})$.  After receiving a message $MSG_{CU,U_j}$ from $S$ indicating that $CU$ and a user $U_j$ have similar images, it records the calls that $CU$ makes to the PAKE protocol with $U_j$. Conversely, if no existing image stored on $S$ is similar to $I_{CU}$ for all other users $U_j$, this implies there will be no further communication between $CU$ and any other user.  If $CU$ uses a value $H(V_{I_{CU}})$ in that call that appears in a hash table $HT_x$, the simulator invokes $\delta$ with the image $I_j$ that corresponds to the hash $H(V_{I_{CU}})$. In this case, $CU$ will receive a key $k_{I_{CU}}$. 
    
    If an image $I_j$ similar to $I_{CU}$ has been uploaded by any $U_j$, $k_{I_j} = k_{I_{CU}}$ is the key corresponding to that image.  We now show that $\Gamma$ and $\delta$ are identically distributed. If $I_{CU}$ already exists in the server's storage and $CU$ behaves honestly, then $V_{I_{CU}} \approx V_{I_{U_j}}$ and thus $k_{CU} = k_{U_{j}}$.  If $I_{CU}$ does not already exist in the server's storage, then something encrypted by $k_{CU}$ will be indistinguishable from random by the security of the symmetric encryption scheme. Thus, $ENC_{k_{CU}}({I_{CU}})$ will be indistinguishable by $S$ from a random value. Now if $CU$ deviates from the protocol then the only action it can take, except for changing its input hash, is to replace its encryption of the image corresponding to $V_{I_{CU}}$ with an encryption of a different image or random data, that it then sends to $S$. 
    
    The result of both types of malicious behavior is that $CU$ sends $S$ hashes $H_1 \cdots H_t$ that are not correct SLSHes corresponding to the data $ENC_{k_{CU}}(I_{CU})$ uploaded. In this case, there are two possibilities: either the server will incorrectly not identify $I_{CU}$ as being similar to any stored image when it should, or the server will incorrectly identify $I_{CU}$ as being similar to some other image.
    
    In the first case, upon initial upload, as no similar images to $I_{CU}$ are identified, $CU$ does not exchange keys with any other user prior to upload, and learns nothing about another user's image. However, another user $U_j$ later uploading $ENC_{k_{I_j}}(I_j)$ may then have their image identified by $S$ as being similar to $I_{CU}$. In this case, the users will then make calls to the PAKE protocol. $CU$ cannot learn anything more than what is described in the security definition about $I_j$ from either $ENC_{k_{I_j}}(I_j)$ or from SLSHes of $V_{I_j}$. For $CU$ to learn anything about $I_j$, they need to recover $k_{I_j}$. However, without having $I_j$ or a highly similar image \textit{a priori}, $CU$ cannot correctly calculate new SLSHes, and thus cannot receive $k_{I_j}$ through PAKE. Thus in the first case, $CU$ cannot learn anything more than what is described in the security definition about $I_j$.
    
    In the second case, because an image in the server's storage is similar to the new image, $CU$ will begin the PAKE protocol with user $U_j$ who owns the similar image. $CU$ may or may not have honestly generated $H_1 \cdots H_t$ from an image $I_{CU}'$. If this was not the case, then as above $CU$ cannot recover $k_{I_j}$, and cannot learn anything more than what is described in the security definition about $I_j$.
    On the other hand, if $CU$ generated $H_1 \cdots H_t$ honestly from $I_{CU}'$, then $I_{CU}' \approx I_j$, and $CU$ is able to correctly generate new locality-sensitive hashes for its PAKE sessions with $U_j$. Then in this case, $CU$ can recover $k_{I_j}$, allowing it to download and decrypt $ENC_{k_{I_j}}(I_j)$, recovering $I_j$. However, because $I_{CU}' \approx I_j$, this does not violate ideal functionality or the security definition.
   
    We assume that $CU$ sends $q$ messages $m_1 \cdots m_q$ during its execution (hashes, etc.), and replaces $y$ of these messages.  In the real model $\Gamma$, the execution will change if there is an index $j$ such that the message $m_j$ in $\Gamma$ (which corresponds to the same $m_j'$ in $\delta$) is replaced by $CU$.
     As a result, $CU$ will change the execution even though it inputs a modified encrypted image or hash. The probability for this event is $y/q$, but it will be detected with high probability. However, in $\delta$, the same result will occur in the event that a replaced element is chosen by the simulator. The probability of this event occurring is also $y/q$ by the security of PAKE. Thus, we conclude that the views of $\Gamma$ and $\delta$ are identically distributed.

\noindent\textbf{A corrupt previous uploader $CP$: }
    Here, we say that $CP$ has previously been honest in uploading its hashes and encrypted image to the server. 
    $CP$ will learn from this execution if $H_{p'}(V_{I_i}) = H_{p'}(V_{I_{CP}})$, for $p' \in  \{p_i, p_j\}$. 
    The simulator $SIM$ will receive $CP$'s input $H_{p'}(V_{I_{CP}})$, but since $CP$ has previously uploaded $ENC_{k_{I_{CP}}}(I_{CP})$, it only needs to recover the key corresponding to $k_{CP}$.
    The simulator $SIM$ first checks whether the hashes $H_1 \cdots H_t$ of $V_{I_{i}}$ match the hashes of $I_{CP}$ in $S$'s hash tables. If not, $CP$ is not identified as having a similar image to $I_i$, and will take no action. Otherwise, $S$ observes $CP$'s inputs $H_{p'}(V_{I_{CP}})$ to the PAKE protocol, the new key $k_i$ that $U_i$ gains from PAKE, and the message $ENC_{k_{CP}}(k_{I_{CP}})$. 
    Then $CP$ and $U_i$ exchange the same information necessary to run the PAKE protocol as a black box. 
    The simulator checks if $H(V_{I_{CP}}) = H(V_{I_{i}})$. If so, it extracts and sends $k_{CP}$ to ${U_{i}}$. 
    
    To show that the simulation is accurate, note if $CP$ behaves honestly, then $\delta$ and $\Gamma$ are obviously indistinguishable. $CP$ can only deviate from the protocol in two ways.
    First, it can deviate from the PAKE protocol in a way that forces PAKE to abort, or by providing incorrect input so that the symmetric keys from PAKE do not match. In either case, though the adversary has managed to prevent the successful operation of the protocol, it has not learned any new information about other parties' images, due to the security of PAKE. Second, it can abide honestly by the PAKE protocol, but send an incorrect key that $U_i$ then cannot use to successfully recover $I_{CP}$. Again, $CP$ does not learn any new information about another party's image, and we can safely abort if necessary. We conclude that the views of $\delta$ and $\Gamma$ are identically distributed.

\noindent\textbf{A corrupt server $CS$:}
The simulator will first act as a user $U_i$ with image $I_i$, and send hash values $H_1 \cdots H_t$ to $CS$. The server $CS$ will query those values against its tables $HT_1 \cdots HT_t$, and either find that there is a user $U_j$ with a similar image, or that no similar image has been stored with the server. If the server behaves honestly in the second case or dishonestly in the first, then the server will accept the upload of $ENC_{k_{I_i}}(I_i)$. By the security of the symmetric encryption and the security of the SLSH, the server cannot learn any new information beyond what is described in the security definition about $I_i$ from $H_1 \cdots H_t$ and $ENC_{k_{I_i}}(I_i)$.

The server can also behave maliciously by telling $U_i$ that they have uploaded an image $I_i$ similar to an image $I_j$ previously uploaded by $U_j$, and directing them to perform PAKE to share keys. If this happens, then the users $U_i$ and $U_j$ will with overwhelming probability choose different passwords in their PAKE protocol, and will thus be unable to share encryption keys. Thus when $U_i$ and $U_j$ have different images, neither can learn anything more than what is described in the security definition about the other's image, even when $CS$ behaves dishonestly. Suppose the server has $m$ other images that they can choose to identify as similar with $I_i$. Deduplication fails if the owner $U_j$ and their image $I_j$ are not chosen correctly by $CS$, which happens in both the real and the ideal model with the same probability $r/m$ where $r$ is the number of dissimilar images. In both cases, $U_i$ and/or $U_j$ will be able to detect this behaviour with high probability. Thus $\delta$ and $\Gamma$ are identically distributed.



\noindent\textbf{Colluding corrupt server $CS$ and corrupt previous uploader $CP$:}

When the honest user $U_i$ uploads a new image, $CS$ can either behave honestly or maliciously. If $CS$ behaves honestly, then this reduces to the above case of a single corrupt previous uploader. If $CS$ does not, then it can take only one action not already enumerated in the above case of a single malicious server. The server can falsely claim that $I_i$ is similar to an image $I_{CP}$ owned by $CP$, and direct them to exchange keys. Then this reduces to the case of a single corrupt previous uploader.

\noindent\textbf{Colluding corrupt server $CS$ and corrupt uploader $CU$:}

Similarly, the only dishonest actions the collusion of $CS$ and $CU$ can take that differ from already-enumerated cases is for $CS$ to falsely tell an innocent previous uploader $U_j$ that $CS$ has attempted to upload an image similar to an image $I_j$ stored by $CS$. Then this also reduces to the case of a single corrupt uploader.

\end{proof}

\section{Practical Problems} \label{sec:practical-problems}


\subsection{Inference and Anonymity}

The protocol $\Gamma$ does not allow participants to learn anything more than what is described in the security definition about images unless they possess a similar image \textit{a priori}. However, an inference attack is trivial to mount: a user can easily learn if another user has uploaded an image by simply requesting to upload that image to the server. This attack can be prevented by making all connections anonymous, which can be accomplished through onion routing \cite{reed1998anonymous}. When the server notifies two users to share encryption keys, it then also gives them a one-time-use token pair that the users can use to authenticate themselves to one another without revealing their identities. A common assumption in image deduplication is that the server must be able to know which users own which images in order to identify duplicates between different users, so we follow the precedent set by \cite{Liu2015,Li2015,gang2015secure}, and assume brute-force server inference attacks to be outside our threat model.

\subsection{Adding Images}

The server's hash tables can only hold up to $2^h$ entries each, where $h$ is the size in bits of the result of the SLSH. When taking into account the desire to avoid collisions due to load, the practical upper bound is even lower. In other applications, the server can rehash its elements into a larger table when the number of elements it stores approaches that threshold. However, because the server cannot generate SLSHes (it does not have the original image or feature vector), it would have to ask the users to generate new hashes. This is costly to the users computationally, and is thus not a desirable approach.

Instead, the server can initialize a new set of hash tables $HT_1' \cdots HT_t'$, with a new set of parameters $p_1' \cdots p_t'$. Users uploading will henceforth provide the server two sets of hashes of their images' feature vectors: one set for the parameters $p_1 \cdots p_t$ of the original hash tables $HT_1 \cdots HT_t$, and one set for the parameters $p_1' \cdots p_t'$ of the new hash tables $HT_1' \cdots HT_t'$. Newly uploaded images are queried against all the hash tables, but only stored (if not deduplicated) in the new set. While this doubles the amount of computation the users must perform, these calculations are still only performed once, at image upload. Further, this strategy allows the server to store images beyond the original capacity of $HT_1 \cdots HT_t$ without violating user privacy. This scenario should be rare as long as $h$ is chosen to be sufficiently large, so that tables are not filled quickly and adding tables occurs only rarely.

\subsection{Sharing the Load}

Our system offers a high degree of privacy and functionality to its users, at the cost of extra computation. One of these costs is the PAKE-based key exchange that users must perform. The original owners of images that are ``popular'' (frequently selected for deduplication) bear a disproportionate part of this load.
A server can attempt to prevent this unfair situation by not always selecting the image's original owner to perform key exchange with new uploaders, but by instead selecting from all users who already have access and thereby distributing the load fairly. In this way, the ability of the server to infer which parties have uploaded similar images actually becomes an advantage for ensuring fairness among users. 

\subsection{Brute-Force Attacks}

In our protocol, both servers and clients can carry out brute-force attacks by repeatedly querying images against the server's storage to see if another client has stored a similar image with the server. Such an attack from the server cannot be theoretically prevented without introducing more assumptions (i.e. an extra server \cite{stanek2014secure}). The practical approach of rate-limiting user queries can prevent such attacks from users \cite{Li2015}.  

\subsection{Leveraging Trusted Hardware}
In order to prevent an adversary from conducting a brute-force attack, this protocol could be modified to utilize secure hardware to prevent a server from guessing how similar two images are by observing the number of hashes that match. For instance, using Intel SGX \cite{costan2016intel}, we could define a function that computes the similarity score in a secure enclave and only outputs a binary value to indicate whether or not the similarity score is above a threshold. This would prevent a malicious user from learning extra information regarding how exactly similar their image is to another user's, but would make the protocol hardware-dependent. Remote attestation can  securely verify that the server is running authenticated code that has not been tampered with.

\section{Experimental Evaluation}

\subsection{Testing Implementation}

We implemented and tested feature extraction, dimensionality reduction, nearest-neighbor searching, and the SPAKE2 subprotocol \cite{abdalla2005simple}. Our implementation of SPAKE2 is in C++, and uses GMP for algebraic operations \cite{Granlund12}. The other tests are written in Python, and make use of the OpenCV library for image processing \cite{opencv_library}.  Keras and Tensorflow \cite{abadi2016tensorflow,chollet2015keras} are used for feature extraction, and a modified version of lshash incorporating the SHA256 cryptographic hash 
was used for dimensionality reduction and nearest-neighbor search\cite{lshash}. To measure realistic performance, our tests were run on a server node belonging to a cluster in active use by a university (Intel Xeon CPUs, 128 GB of RAM, and GTX 1080Ti). The nodes  were not exclusively used by us, and our tests were run in an environment similar to servers under high load. This may have introduced extra latency and variance in our results.

We used training data from the standard image datasets featured in the Visual Decathalon Challenge \cite{rebuffi2017learning}. We have omitted results from the Imagenet dataset from our graphs for readability, though those results were also considered in drawing our conclusions. 
The number of images in each dataset is given in Table \ref{table:query}.

\begin{table}[t]
\centering
\caption{Time to Query 100 Images}\vspace{-5pt}
\label{table:query}
\begin{tabular}{|p{1.6cm}|p{0.9cm}|p{1.8cm}|p{2.3cm}|}
\hline\textbf{Dataset}  & \textbf{Size}  & \textbf{Avg. Time per Query (ms)} & \textbf{Avg. Time per Indexed Image (ms)}\\
\hline
vgg-flowers & 1020 & 8.648421288 & 0.847884440\\
\hline
dtd  & 1880 & 17.41103601 & 0.926118937 \\
\hline
aircraft  & 3334 & 15.30001068 & 0.458908539\\
\hline
ucf101 & 7585 & 34.59270334 & 0.456067282 \\
\hline
omniglot & 17853 & 48.57958031 & 0.272108779\\
\hline
daimlerpedcls & 23520 & 2.125451088 & 0.009036782 \\
\hline
gtsrb & 31367 & 95.60348845 & 0.304790029  \\
\hline
cifar100 & 40000 & 166.0737145 & 0.415184286  \\
\hline
svhn & 42566 & 105.8770444 & 0.248736185  \\
\hline
imagenet & 1232167 & 189.7144794 & 0.001897144 \\
\hline
\textbf{Average} & \textbf{140129} & \textbf{54.91238334} & \textbf{0.437648362} \\
\hline
\end{tabular}\vspace{-10pt}

\end{table}

\begin{table}[t]
\centering
\caption{Hash Computation Time for 24 bits (ms)}\vspace{-5pt}
\label{table:hash_tables}
\begin{tabular}{|p{1.4cm}|p{1.0cm}|p{0.6cm}|p{0.7cm}|p{0.8cm}|p{1.1cm}|}
\hline
\textbf{\# of Tables} & \textbf{Min} & \textbf{Avg.} & \textbf{Max} & \textbf{Median} & \textbf{Std. Dev.} \\
\hline
1 & 0.14 & 0.33 & 14.93 & 0.15 & 0.67 \\
\hline
2 & 0.24 & 0.26 & 0.32 & 0.26 & 0.01 \\
\hline
4 & 0.48 & 0.51 & 0.55 & 0.51 & 0.01 \\
 \hline
8 & 1.04 & 1.65 & 30.70 & 1.14 & 1.91 \\
 \hline
16 & 2.05 & 2.16 & 6.07 & 2.13 & 0.29 \\
\hline
32 & 4.08 & 14.85 & 214.00 & 12.79 & 17.33 \\
\hline
\end{tabular}\vspace{-10pt}

\end{table}

\begin{table}[t]
\centering
\caption{Hash Computation Time for 6 Tables (ms)}\vspace{-5pt}
\label{table:hash_bits}
\begin{tabular}{|l|p{1.0cm}|p{0.6cm}|p{0.7cm}|p{0.8cm}|p{1.1cm}|}
\hline
\textbf{Hash Length} & \textbf{Min} & \textbf{Avg.} & \textbf{Max} & \textbf{Median} & \textbf{Std. Dev.} \\
\hline
16 bits & 0.75 & 0.89 & 30.98 & 0.87 & 0.55 \\
\hline
24 bits & 0.77 & 0.93 & 33.75 & 0.81 & 0.76 \\
\hline
32 bits & 0.78 & 4.23 & 84.55 & 0.87 & 7.56 \\
\hline
64 bits & 0.83 & 9.15 & 84.55 & 5.80 & 9.21 \\
\hline
\end{tabular}\vspace{-10pt}

\end{table}

\subsection{Efficiency}

\noindent \textbf{Feature Extraction:}
The time to extract features for each database using ResNet50 is shown in Fig. \ref{graph:feat_ext}. We ran 10 trials on each dataset (with the exception of Imagenet, which was tested 5 times). Our results show that feature extraction on a single image takes about 33 ms on average. This computational overhead for an image upload is a manageable amount for a client. 

\noindent \textbf{Dimensionality Reduction:}
The time to index a database of images is shown in Fig. \ref{graph:indexing}. We performed 10 trials on each dataset (Imagenet was tested only 5 times). Our results show that indexing with 6 hash tables and a locality-sensitive hash size of 24 bits takes about 39 ms per image on average, taking hash calculation into account. These parameters were chosen to strike a balance between efficiency and accuracy. The computation time for a client is then even less, as they only need to calculate the hashes, and do not have to index the values into multiple hash tables.

In Tables \ref{table:hash_tables} and \ref{table:hash_bits} we show the time needed to calculate a client's hashes when varying the number of tables and hash size, with data from 4000 trials in each case. In particular, calculating a 24-bit hash for 6 tables takes 0.93 ms on average. As expected, the runtime for hash calculations increases linearly with both the number of tables and hash size. From our experiments we can thus conclude that both client-side hashing and server-side indexing are feasible and scalable.

\noindent \textbf{Nearest-Neighbor Searching:}
We tested the time for querying a small constant number (100) of images against each database, using 10 trials. We used the same index specifications as above. The resulting runtimes are shown in Table \ref{table:query}, which includes both the average time per image query and average query time divided by database size. We can conclude that the average time to query per image may be as little as 2.13 ms. The average time per query across all tested datasets was about 55 ms. Interestingly, we note that the average time for a query does not increase as the size of the previously indexed dataset does, and may even decrease. A possible explanation is that cache/memory coherency yields greater benefits for queries over larger databases. From this, we conclude that querying is computationally feasible and also scalable.


\noindent \textbf{Access Control:}
Our implementation of PAKE was tested over cyclic groups with prime orders of 1024, 2048, 4096, and 8192 bits. Each group was tested with 1000 trials. Even for group sizes of 8192 bits, the user computation averaged below 170 $\mu$s. The time to perform user computation is not dependent on any other parameters of the protocol. This shows that the use of PAKE for key exchange is feasible.

\begin{figure}[t]  
\centering
\begin{tikzpicture}[scale=0.75]
\pgfplotstablegetrowsof{\datatable}
\pgfmathtruncatemacro{\rownumber}{\pgfplotsretval-1}
\begin{axis}[boxplot/draw direction=y,
xticklabels={aircraft,cifar-100,dalmerpad,dtextures,gtrsb, omniglot, svhn,ucf101dyn,vgg-flowers}, xtick={1,...,\the\numexpr\rownumber+1},
x tick label style={scale=0.5,font=\bfseries, rotate=60,align=center},
ylabel={Runtime (seconds)},cycle list/Set1-5 ]

\pgfplotsinvokeforeach{0,...,\rownumber}{

    \pgfplotstablegetelem{#1}{min}\of\datatable
    \edef\mymin{\pgfplotsretval}

    \pgfplotstablegetelem{#1}{avg}\of\datatable
    \edef\myavg{\pgfplotsretval}

    \pgfplotstablegetelem{#1}{max}\of\datatable
    \edef\mymax{\pgfplotsretval}

    \pgfplotstablegetelem{#1}{median}\of\datatable
    \edef\mymedian{\pgfplotsretval}

    \pgfplotstablegetelem{#1}{deviation}\of\datatable
    \edef\mydeviation{\pgfplotsretval}

    \typeout{\mymin,\mymax,\myavg,\mymedian,\mydeviation}
    \pgfmathsetmacro{\mylowerq}{\mymedian-\mydeviation}
    \pgfmathsetmacro{\myupperq}{\mymedian+\mydeviation}
    \edef\temp{\noexpand\addplot+[,
        boxplot prepared={
            lower whisker=\mylowerq,
            upper whisker=\mymax,
            lower quartile=\mymin,
            upper quartile=\myupperq,
            median=\myavg,
            every box/.style={solid,fill,opacity=0.5},
            every whisker/.style={solid },
            every median/.style={solid},
        }, 
        ]coordinates {};}
    \temp
}

\end{axis}
\end{tikzpicture}\vspace{-10pt}
\caption{Feature Extraction Runtime}\vspace{-10pt}
\label{graph:feat_ext}
\end{figure}

\begin{figure}[t]  
\centering
\begin{tikzpicture}[scale=0.75]
\pgfplotstablegetrowsof{\indextable}
\pgfmathtruncatemacro{\rownumber}{\pgfplotsretval-1}
\begin{axis}[boxplot/draw direction=y,
xticklabels={aircraft,cifar-100,daimlerped,dtextures,gtrsb, omniglot, svhn,ucf101dyn,vgg-flowers}, xtick={1,...,\the\numexpr\rownumber+1},
x tick label style={scale=0.5,font=\bfseries, rotate=60,align=center},
ylabel={Runtime (seconds)},cycle list/Set1-5 ]

\pgfplotsinvokeforeach{0,...,\rownumber}{

    \pgfplotstablegetelem{#1}{min}\of\indextable
    \edef\indexmin{\pgfplotsretval}

    \pgfplotstablegetelem{#1}{avg}\of\indextable
    \edef\indexavg{\pgfplotsretval}

    \pgfplotstablegetelem{#1}{max}\of\indextable
    \edef\indexmax{\pgfplotsretval}

    \pgfplotstablegetelem{#1}{median}\of\indextable
    \edef\indexmedian{\pgfplotsretval}

    \pgfplotstablegetelem{#1}{deviation}\of\indextable
    \edef\indexdeviation{\pgfplotsretval}

    \typeout{\indexmin,\indexmax,\indexavg,\indexmedian,\indexdeviation}
    \pgfmathsetmacro{\indexlowerq}{\indexmedian-\indexdeviation}
    \pgfmathsetmacro{\indexupperq}{\indexmedian+\indexdeviation}
    \edef\temp{\noexpand\addplot+[,
        boxplot prepared={
            lower whisker=\indexlowerq,
            upper whisker=\indexmax,
            lower quartile=\indexmin,
            upper quartile=\indexupperq,
            median=\indexavg,
            every box/.style={solid,fill,opacity=0.5},
            every whisker/.style={solid },
            every median/.style={solid},
        }, 
        ]coordinates {};}
    \temp
}

\end{axis}
\end{tikzpicture}\vspace{-10pt}
\caption{Index Construction Runtime}\vspace{-10pt}
\label{graph:indexing}
\end{figure}

\newcommand{\distfigw}{.32}
\begin{figure*}[t]
    \centering
    \begin{tabular}{ccc}
      \includegraphics[width=\distfigw\textwidth]{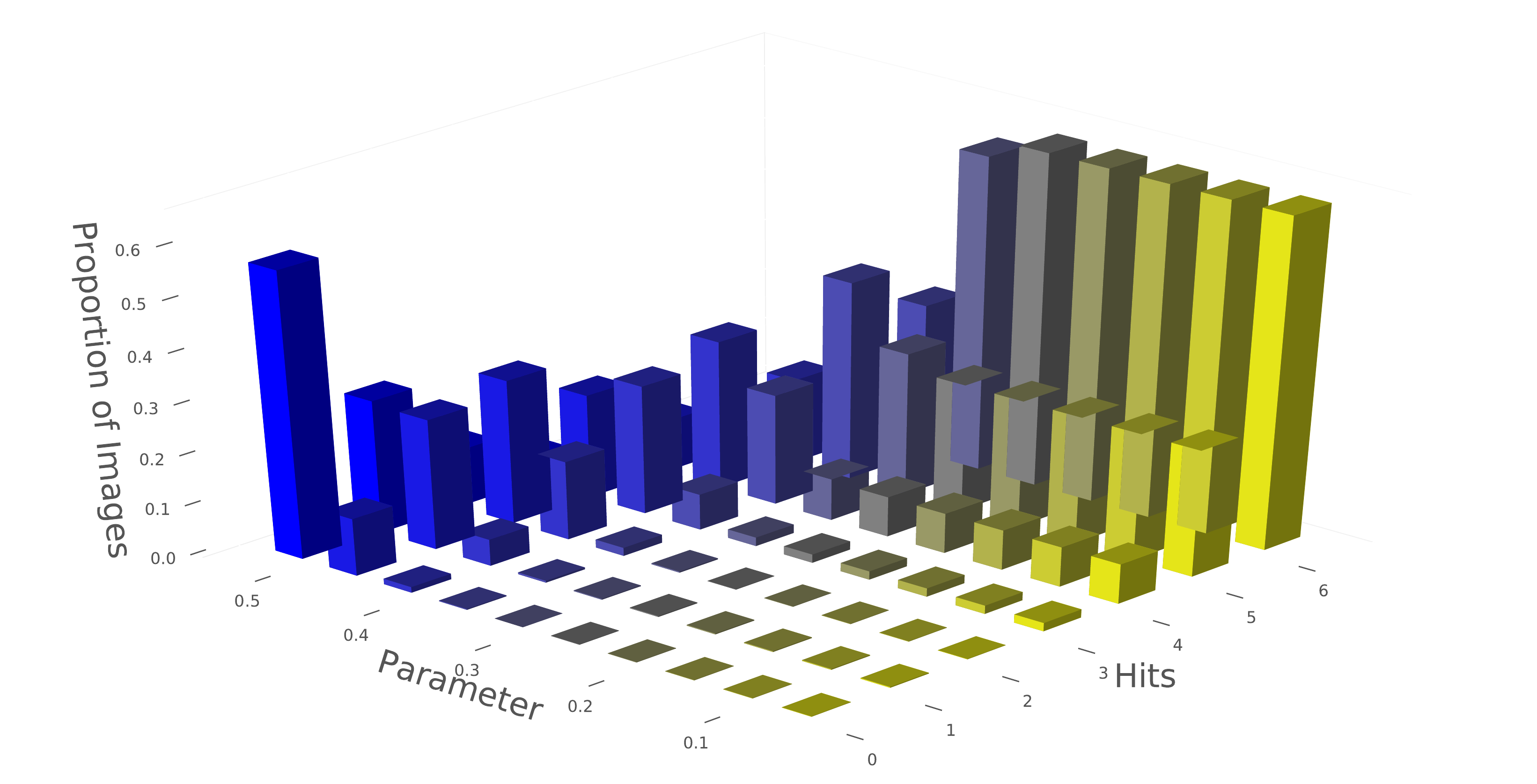}   &  \includegraphics[width=\distfigw\textwidth]{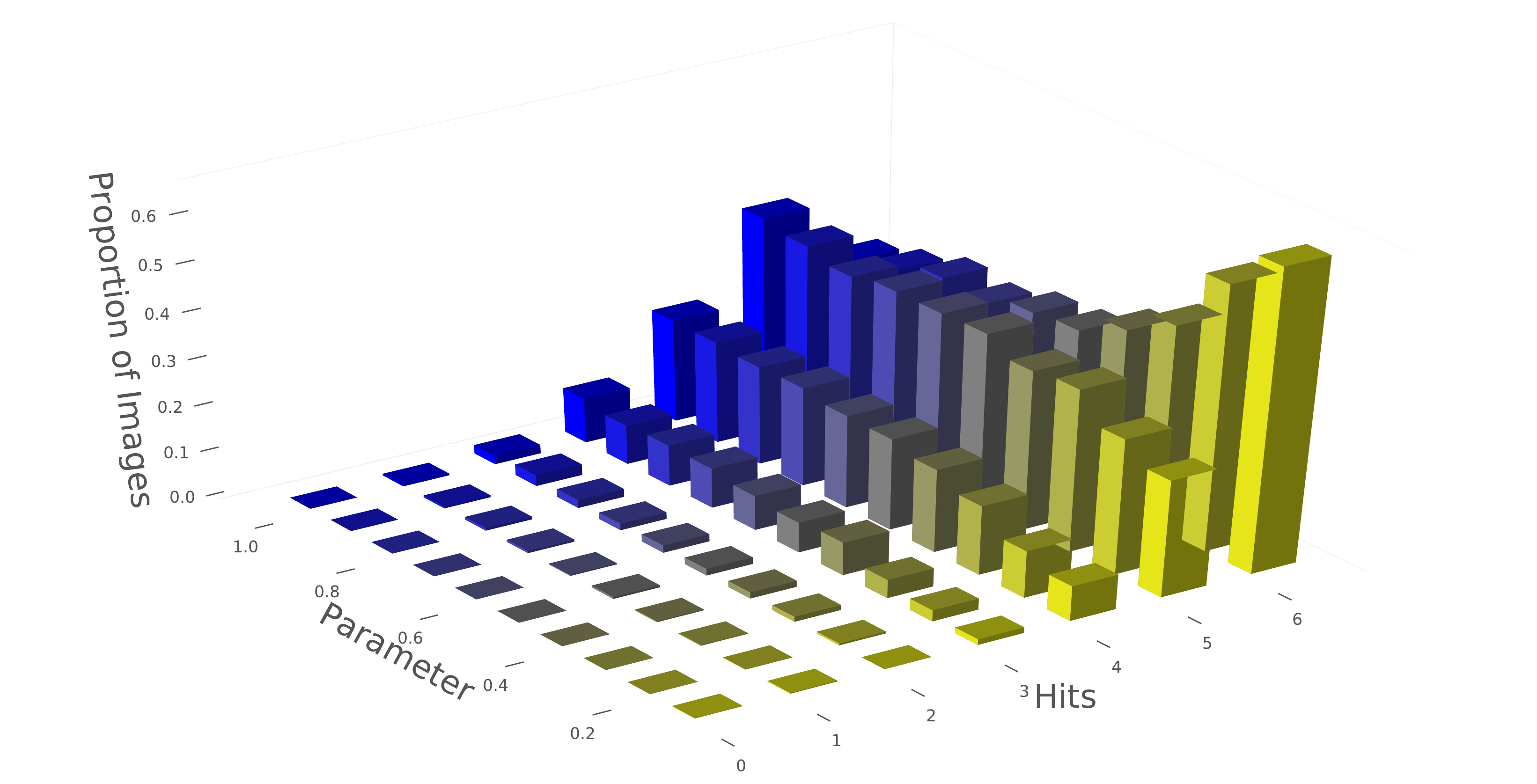} & 
      \includegraphics[width=\distfigw\textwidth]{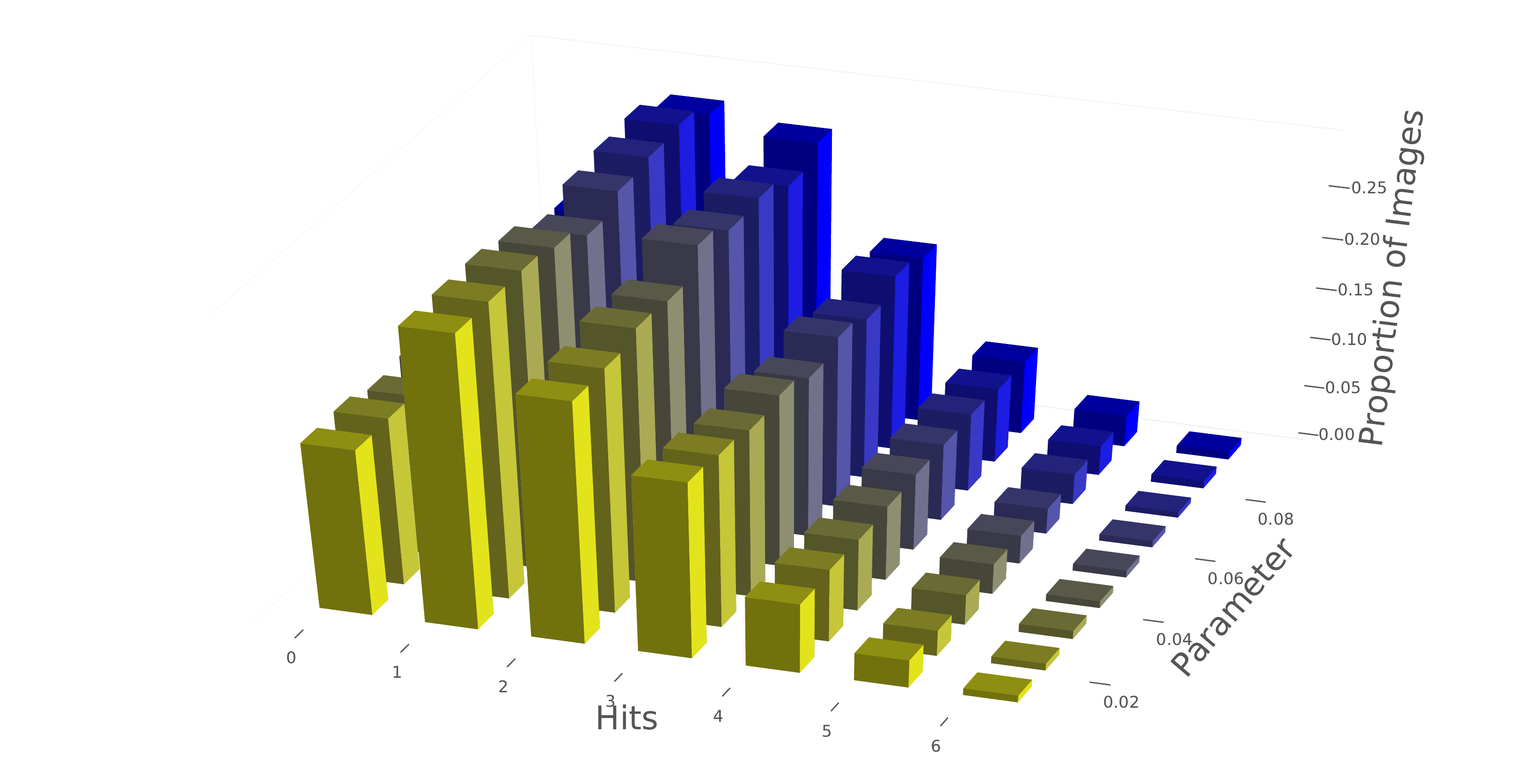}\\
        (a) Blur & (b) Brighten  &  (c) Gaussian noise\\
        \includegraphics[width=\distfigw\textwidth]{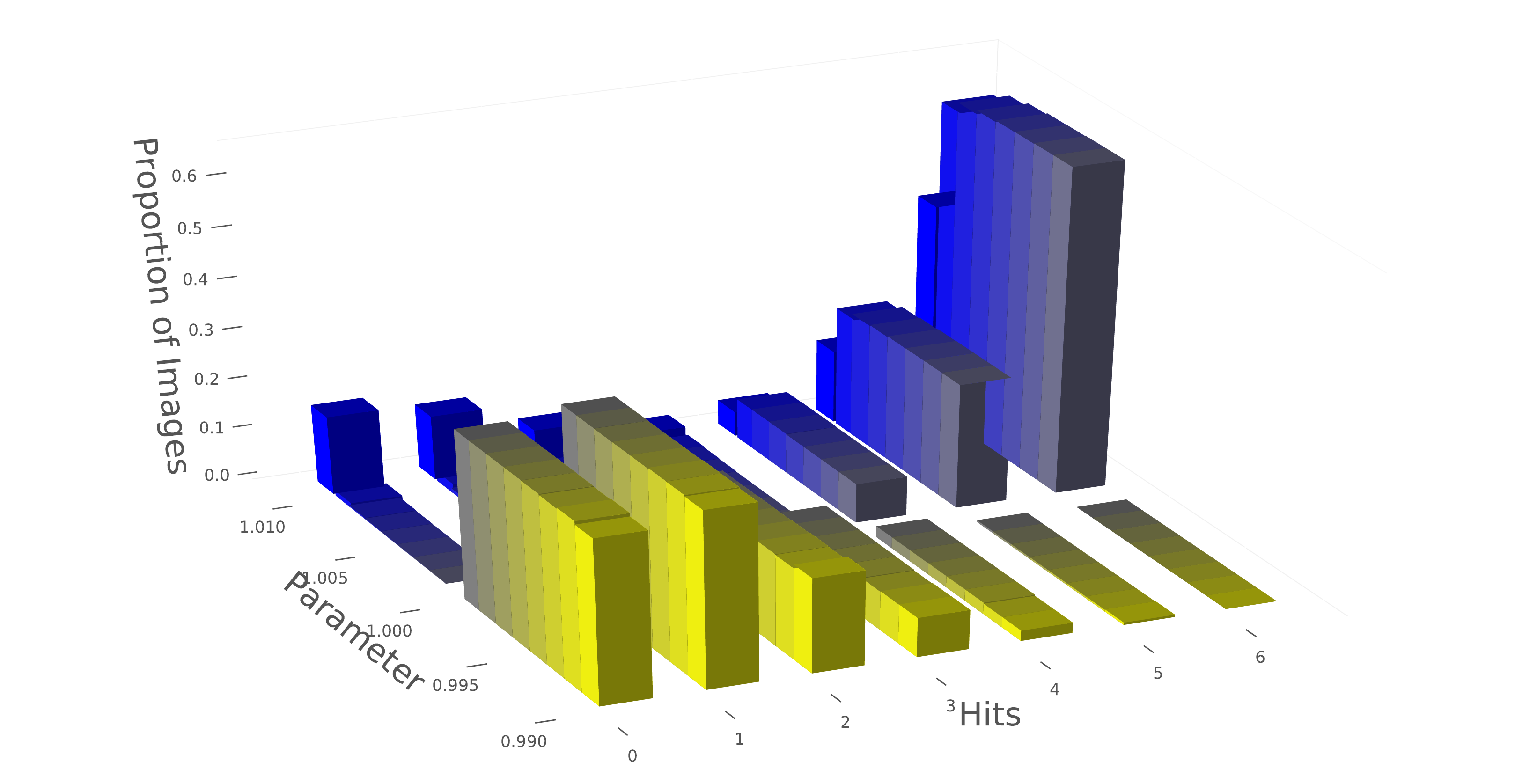} &
        \includegraphics[width=\distfigw\textwidth]{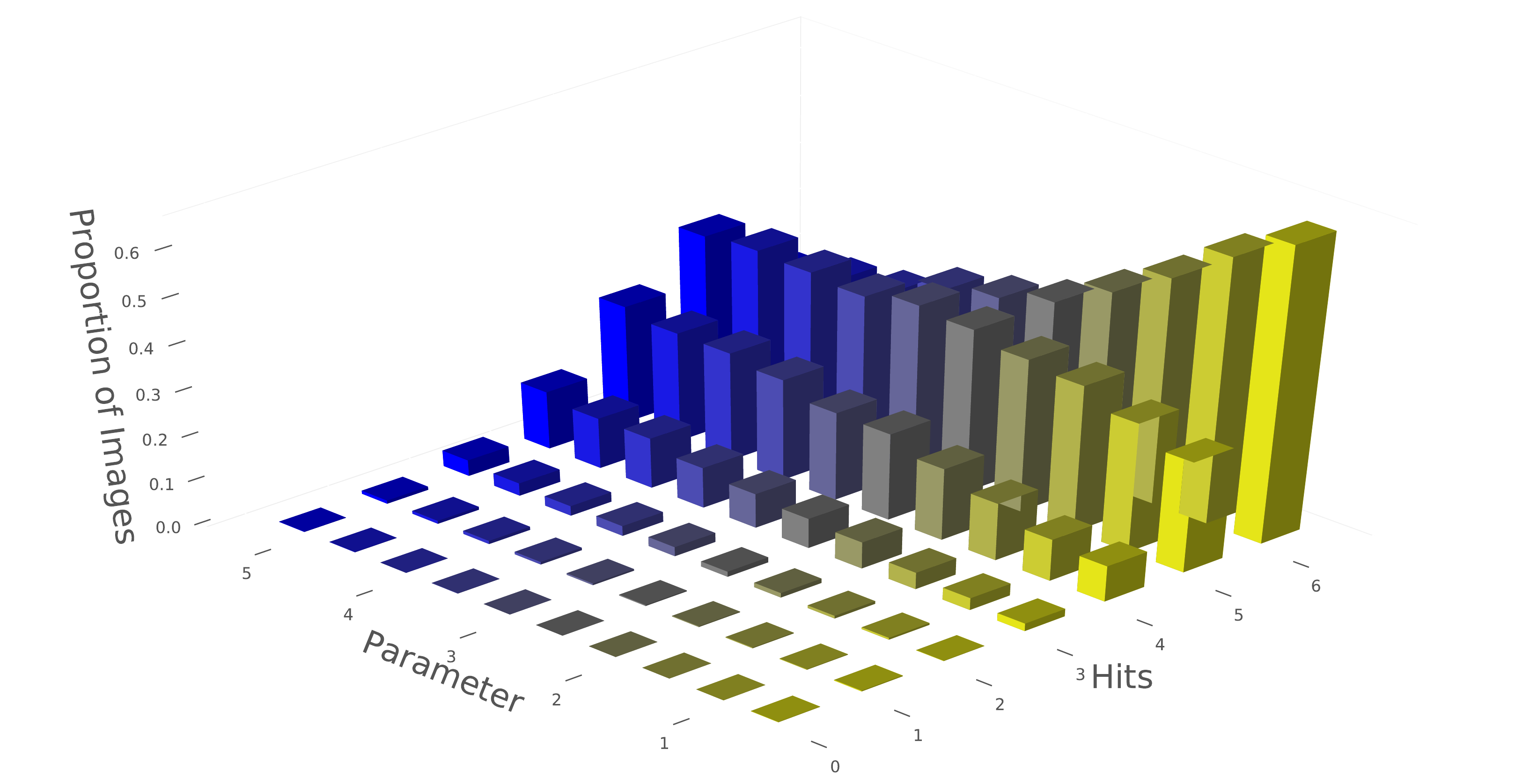}&
        \includegraphics[width=\distfigw\textwidth]{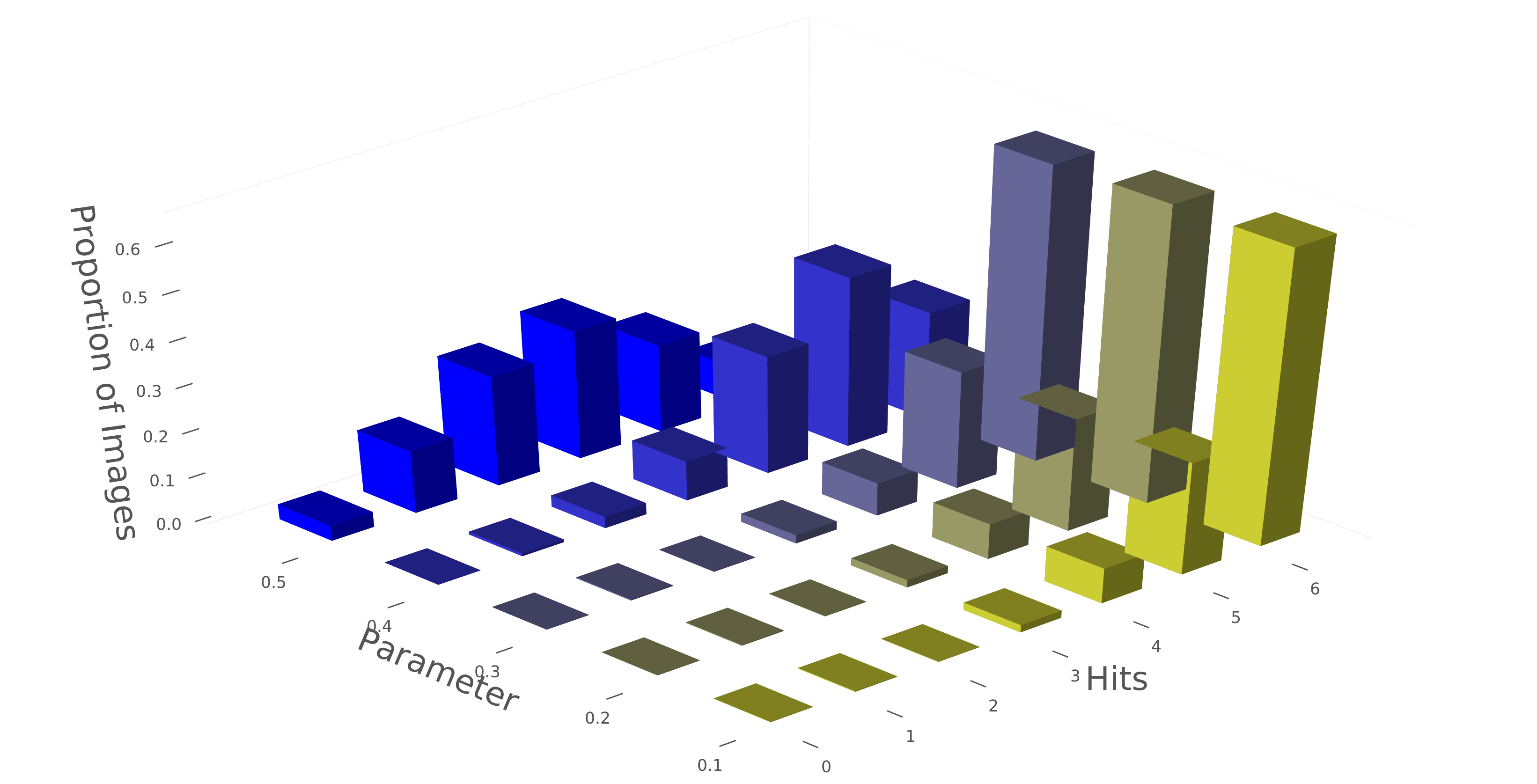} \\
        (d) Resize & (e) Saturate & (f) Sharpen
    \end{tabular}
    \begin{tabular}{cc}
        \includegraphics[width=\distfigw\textwidth]{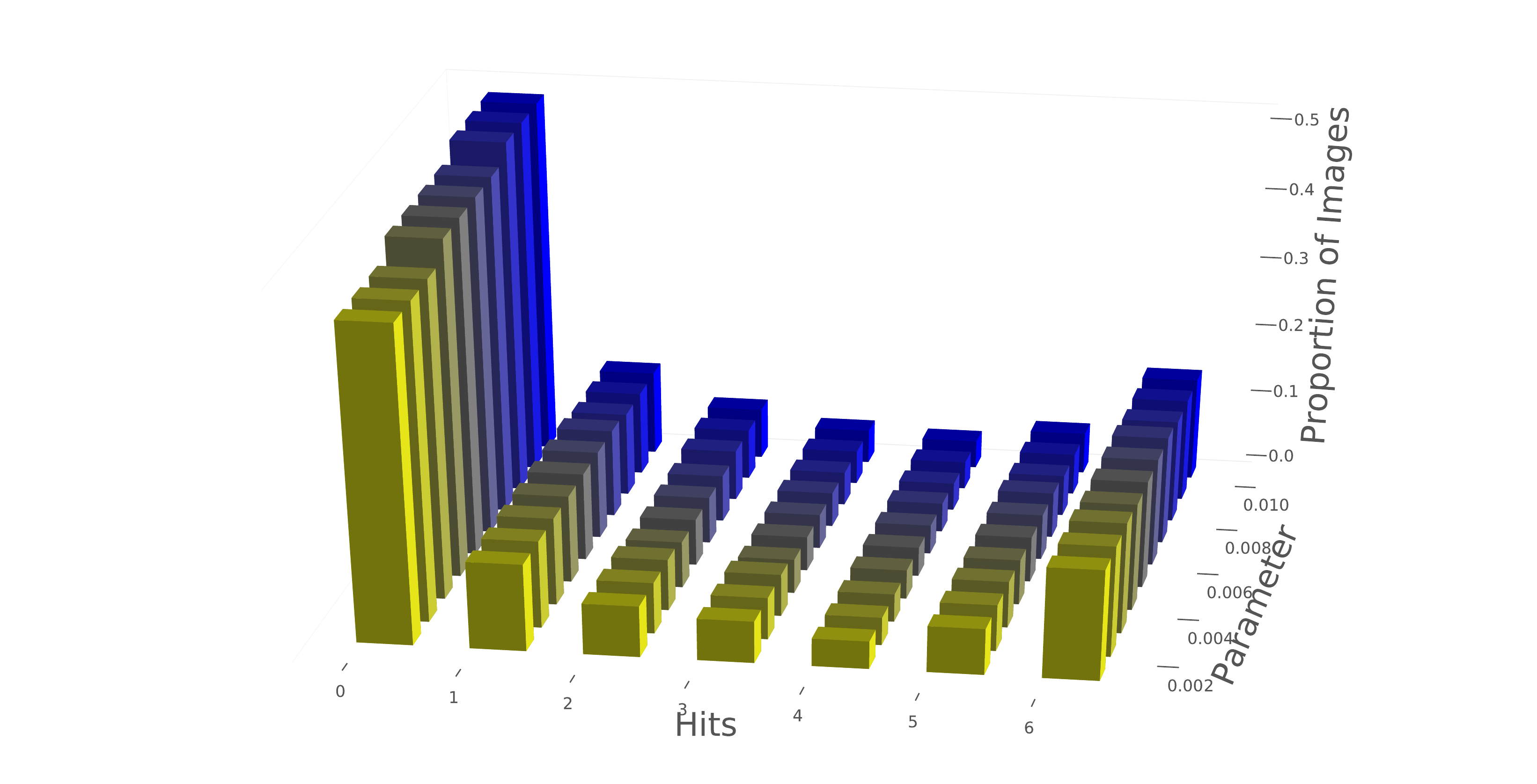}&
        \includegraphics[width=\distfigw\textwidth]{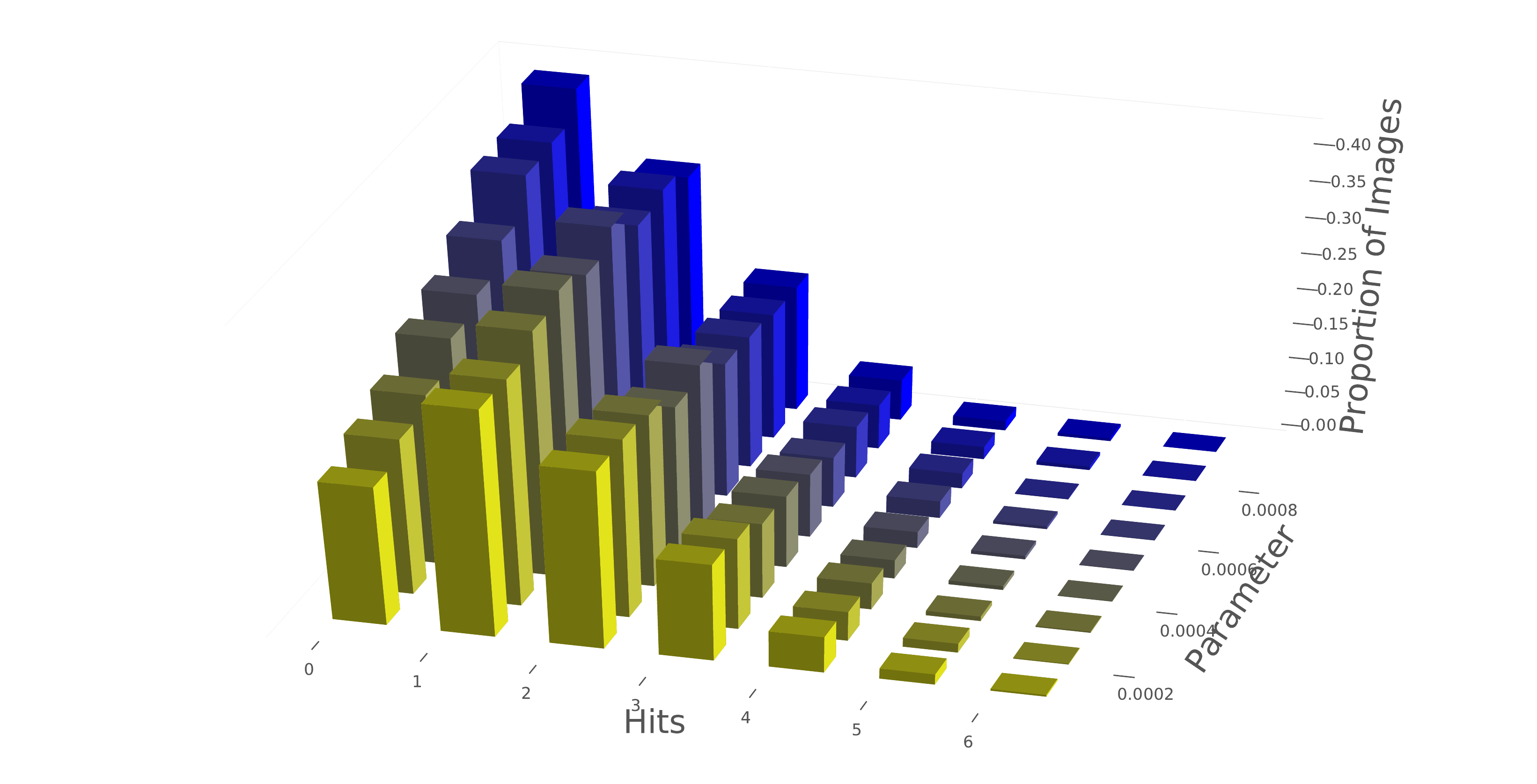}
        \\
        (g) Solarize & (h) Salt/Pepper noise \\
    \end{tabular}
    \caption{Various distortions of different strengths and number of matches in the hash tables. (Different angles are used to show the efficacy or failure in detecting similarity from slight distortions.)}
    \label{fig:accuracy}
\end{figure*}

\subsection{Distortions' Impacts to Our Deduplication}

We tested the propensity of our nearly-identical deduplication scheme to identify images as similar after small distortions are applied. We randomly chose a subset of Imagenet, and applied gradually increasing distortions to images in that subset. We then ran queries with those images and observed how many hash tables recorded a match with the original image.
The results are shown in Figures \ref{fig:accuracy}(a)-(h). In these graphs, the proportion of queries with some number of matches is shown as a function of the number of matches and the severity of the distortion. For example, in Figure \ref{fig:accuracy}(a), the figure shifts from yellow to blue as query images become more blurry, and there is a visible trend of the number of matches decreasing as the distortion increases.
These results show that our system is able to accurately identify (with $c = \lceil \frac{t+1}{2} \rceil$, i.e. hash collisions in more than half the tables) similar images with small changes from blurring, brightening, enlargement, saturation, and sharpening. The system was not able to reliably detect nearly-identical images with distortions of solarization or salt-and-pepper noise, and performed somewhat poorly with Gaussian noise - this is logical, as those types of distortions will affect features more. Shrinking the image also resulted in poor performance, which makes sense, as shrinking an image results in a loss of information.
Our system performed extremely well for false positives (i.e. an image not the original identified as similar) - none of our tests had more than one table indicate a false positive. 

\newcommand{\qosfigw}{.44}
\begin{figure*}[t]
    \centering
    \begin{tabular}{cc}
     \includegraphics[width=\qosfigw\linewidth]{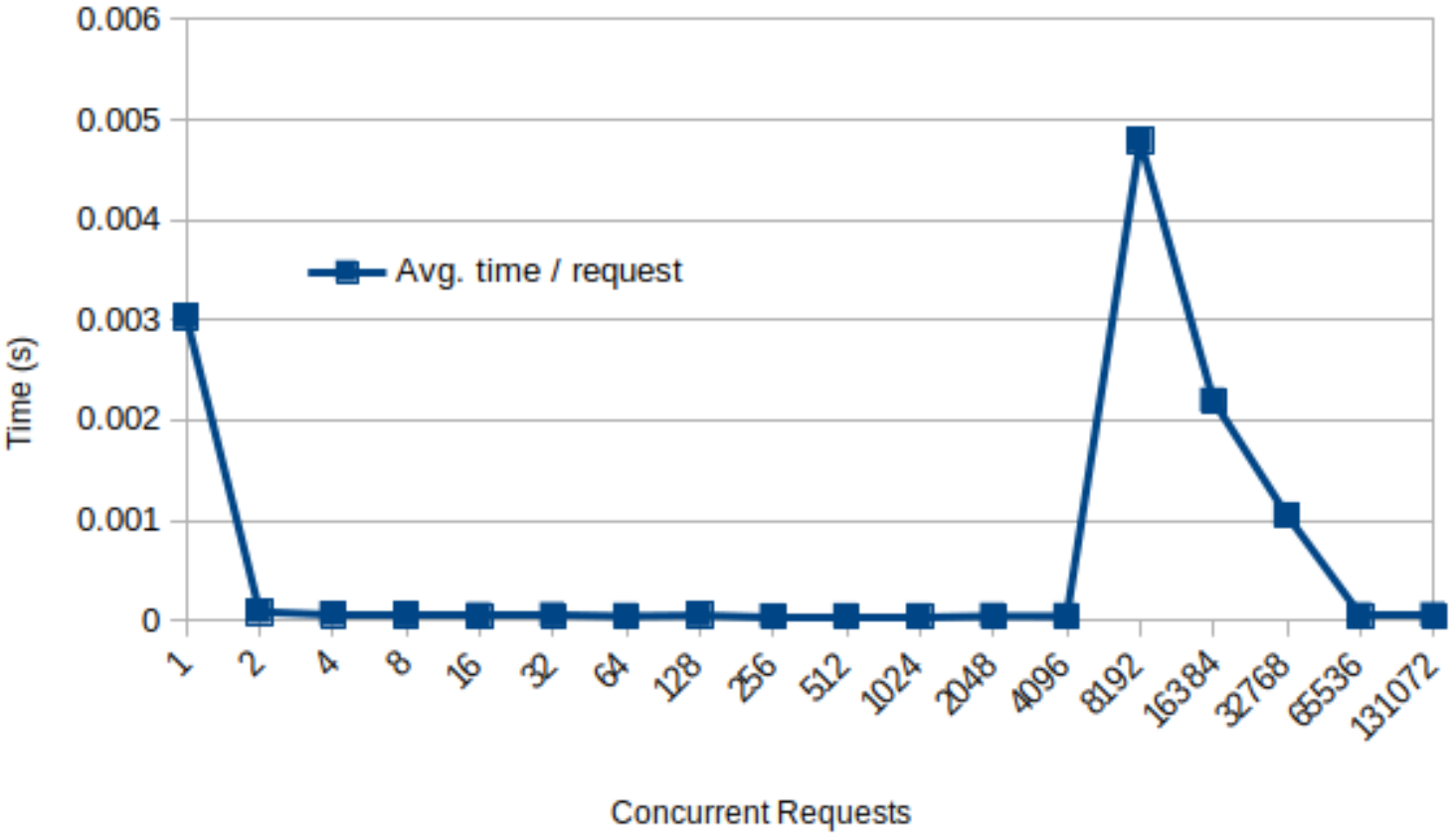}    &  \includegraphics[width=\qosfigw\linewidth]{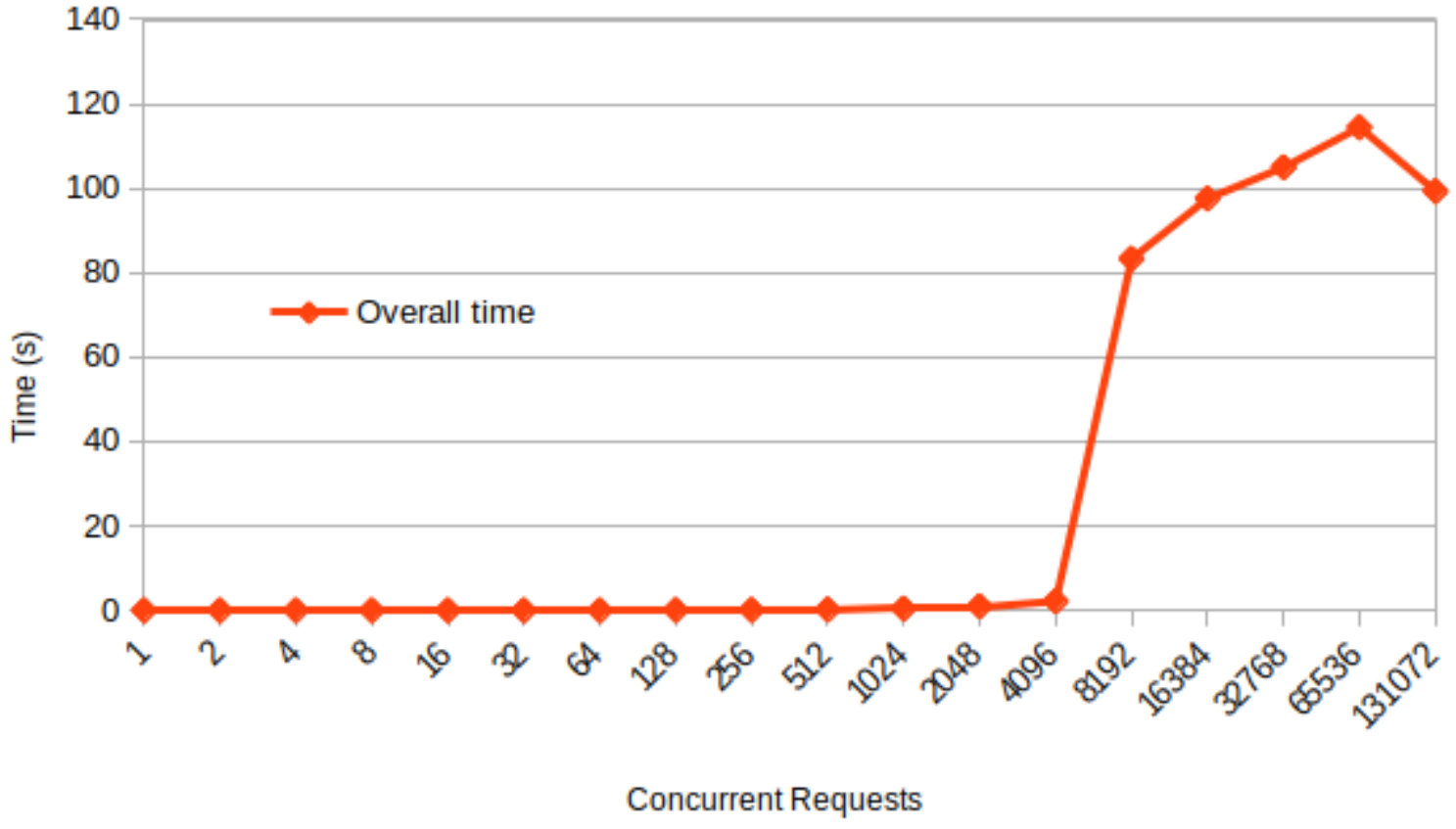}\\
      (a) Avg. time per query   & (b) Total query time  \\
       \includegraphics[width=\qosfigw\linewidth]{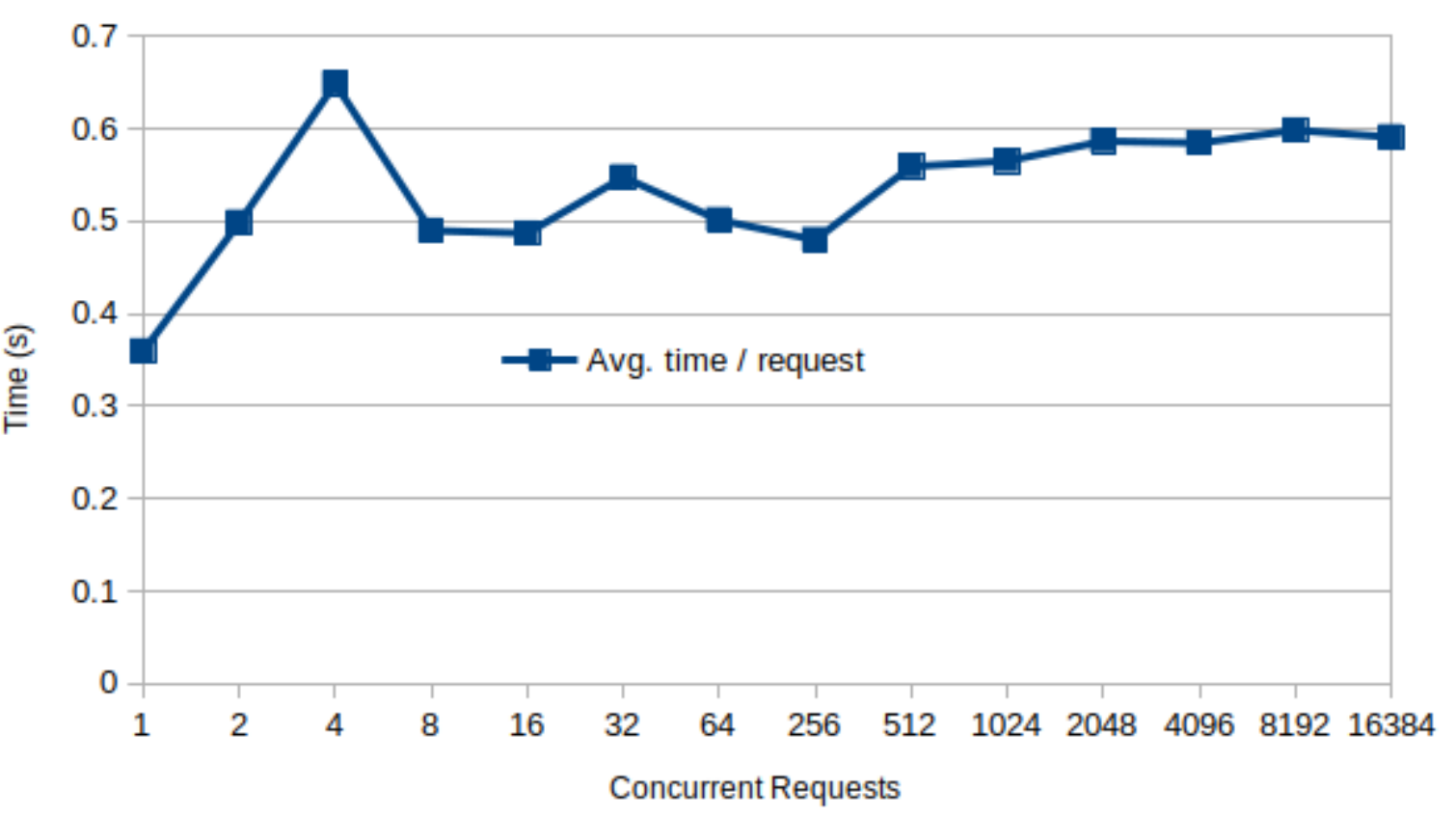}  & \includegraphics[width=\qosfigw\linewidth]{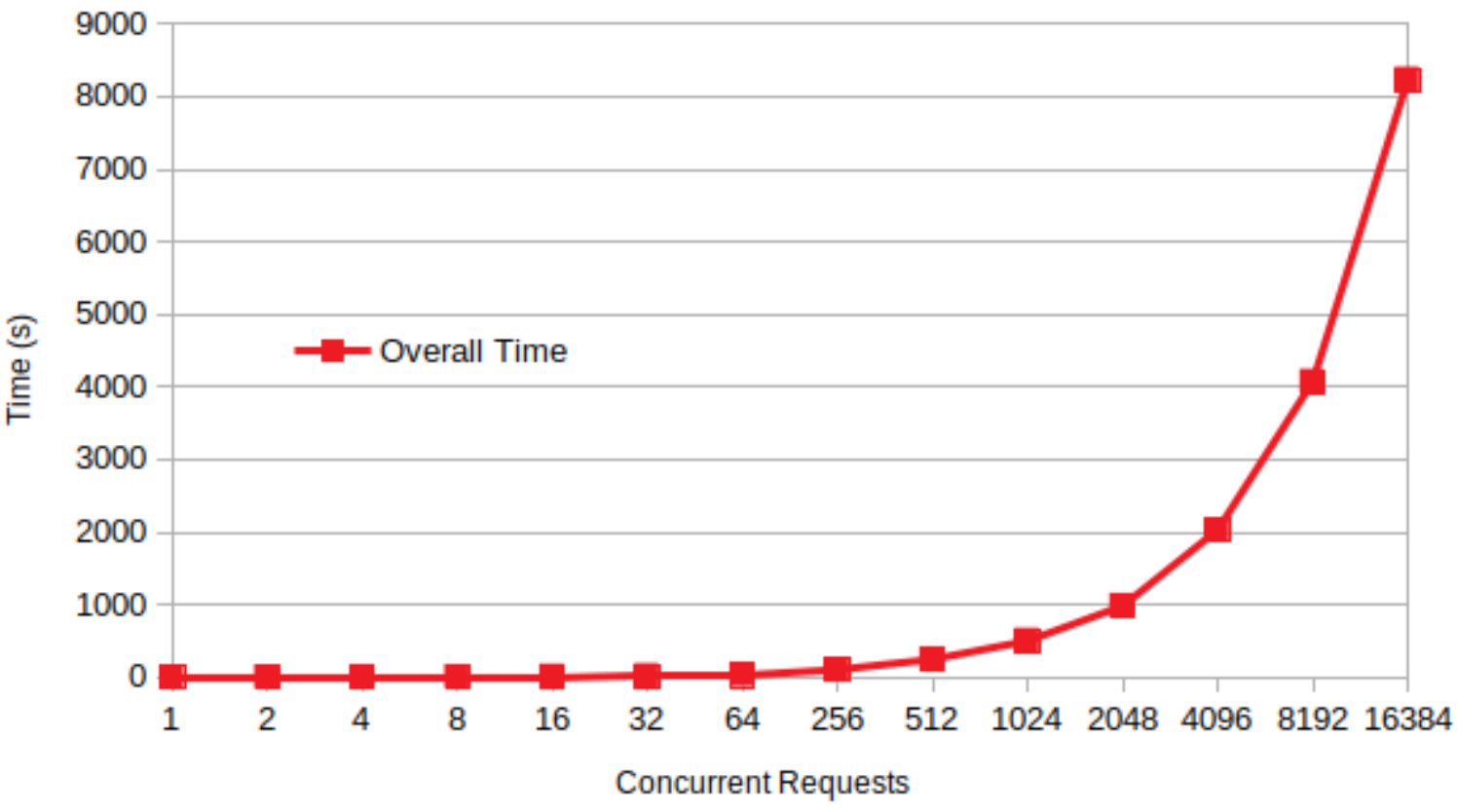} \\
       (c) Avg. time to index  &  (d) Total index time \\
    \end{tabular}
    \caption{Time to handle concurrent requests.}
    \label{fig:qos}
\end{figure*}

\subsection{Quality of Service with Concurrent Requests}

First, we examined how well our system could respond to multiple simultaneous queries. We measured the runtimes of each individual request, as well as the total runtime of the whole set of requests. The results are averaged over five trials, and used up to 16384 threads. The average time for only a single request (Figure \ref{fig:qos}(a)) was higher due to the overhead of initialization. For the rest of the runtimes up to 16384 requests, the average time was on order of 0.1 ms. The average times increase greatly as the number of requests grows close to 16384, as the overhead from more threads increases. After that point, the average time decreases again, as each thread will then have multiple requests. The total runtime for all of the requests (Figure \ref{fig:qos}(b)) shows that up to a certain level of saturation (around 8192 simultaneous requests), the overall runtime was small (hundredths of seconds up to 32 requests, and seconds or less for up to 4096 requests). This shows that for client queries, our protocol is efficient for many simultaneous requests. 

Next, we examined how our implementation handled simultaneous indexing of new images. The average request runtimes are shown in Figure \ref{fig:qos}(c), and the total time to index the entire set of new images is shown in Figure \ref{fig:qos}(d). The time for a single request to complete was under 0.65 seconds in all cases, showing that concurrent indexing is efficient, even with locking. The time to fulfill all requests increased linearly with the number of requests. Our implementation used simple locking. A more sophisticated database system might be able to allow more efficient indexing, though this is beyond the scope of our work.


\section{Conclusion}
\ifframework
This paper formulates a general framework for secure nearly-identical image deduplication. We analyze the trade-offs between efficiency and privacy for various approaches within this framework, and further provide the first protocol for nearly-identical image deduplication with only one server. Our rigorous proof shows the protocol's security, and the experiments show the accuracy and efficiency of our protocol.
\fi

This paper presents the first protocol for nearly-identical image deduplication with only a single untrusted server. Our rigorous proof shows the protocol's security in the highly challenging case of fully malicious and colluding adversaries. We also discuss practical issues widely applicable to deduplication. Finally, our experiments show the efficacy and efficiency of our protocol and its components.

\bibliographystyle{abbrv}
\bibliography{sample.bib}

\end{document}